\titleformat{\section}[block]{\Large\sc\filcenter}{\thesection.}{5pt}{}
\titleformat{\subsection}[block]{\sc\filcenter}{\thesubsection.}{5pt}{}
\providecommand{\institute}[1]{
  \apptocmd{\@author}{\end{tabular}
    \par\smallskip
    \begin{tabular}[t]{c}
    #1}{}{}
}
\providecommand{\email}[1]{
  \apptocmd{\@author}{\end{tabular}
    \par\smallskip
    \begin{tabular}[t]{c}
    \texttt{#1}}{}{}
}
\theoremstyle{plain}
\newtheorem{theorem}{Theorem}[section]
\newtheorem{lemma}[theorem]{Lemma}
\theoremstyle{definition}
\newtheorem{definition}[theorem]{Definition}
\theoremstyle{remark}
\newtheorem{claim}{Claim}
\newcommand{\fo}{\ensuremath{\text{FO}}}
\newcommand{\efgame}{Ehrenfeucht-Fra\"iss\'e\xspace}
\newcommand{\str}[1]{\ensuremath{\mathfrak{#1}}}
\newcommand{\ra}{\rightarrow}
\newcommand{\predicate}[1]{\textrm{#1}}
\newcommand{\Succ}{\predicate{Succ}}
\newcommand{\Total}{\predicate{Total}}
\newcommand{\PartialSucc}{\predicate{PartialSucc}}
\newcommand{\NotPartialSucc}{\predicate{NotPartialSucc}}
\newcommand{\SomeTotalR}{\predicate{SomeTotalR}}
\newcommand{\RTotal}{\predicate{RTotal}}
\newcommand{\Rule}[2]{#1 & \longleftarrow & #2}
\newcommand{\Datalog}{\ensuremath{\texttt{Datalog}(\neg)}\xspace}
\newcommand{\nats}{\mathbb{N}}
\newcommand{\Tot}{\mathsf{Tot}}
\newcommand{\Gap}{\mathsf{Gap}}
\newcommand{\LT}{{\L}o\'s-Tarski\xspace}
\renewcommand{\phi}{\varphi}
\newcommand{\tup}[1]{\bar{#1}}
\begin{document}

\title{Extension Preservation in the Finite and Prefix Classes of First Order Logic}

\author{\large{Anuj Dawar and Abhisekh Sankaran}\\[3pt]}

\institute{\small{Department of Computer Science and Technology,}\\\small{University of Cambridge, U.K.}}
\email{\small{\{anuj.dawar, abhisekh.sankaran\}@cl.cam.ac.uk}}
\date{}
\maketitle

\begin{abstract}
It is well known that the classic {\L}o\'s-Tarski preservation theorem fails in the finite: there are first-order definable classes of finite structures closed under extensions which are not definable (in the finite) in the existential fragment of first-order logic.  We strengthen this by constructing for every $n$, first-order definable classes of finite structures closed under extensions which are not definable with $n$ quantifier alternations.  The classes we construct are definable in the extension of Datalog with negation and indeed in the existential fragment of transitive-closure logic.  This answers negatively an open question posed by Rosen and Weinstein.

\end{abstract}

\section{Introduction}\label{section:intro}
The failure of classical preservation theorems of model theory has
been a topic of persistent interest in finite model theory.  In the
classical setting, preservation theorems provide a tight link between
the syntax and semantics of first-order logic ($\fo$).  For instance, the \LT
preservation theorem (see~\cite{hodges}) implies that any sentence of first-order logic
whose models are closed under extensions is equivalent to an
existential sentence.  This, like many other classical preservation
theorems, is false when we retrict ourselves to finite structures.
Tait~\cite{tait} and Gurevich~\cite{gurevich} provide examples of
sentences whose finite models are closed under extensions, but which
are not equivalent, over finite structures, to any existential
sentence.  Many other classical preservation theorems have been
studied in the context of finite model theory
(e.g.~\cite{Rosen,Rossman}), but our focus in this paper is on
extension-closed properties.

The failure of the \LT theorem in the finite opens a number of
different avenues of research.  One line of work has sought to
investigate restricted classes of structures on which a version of the
preservation theorem holds (see~\cite{ADG,tame}).  Another direction
is prompted by the question of whether we can identify some proper
syntactic fragment of $\fo$, beyond the existential, which contains
definitions of all extension-closed $\fo$-definable properties.  For
instance, the examples from Tait and Gurevich are both $\Sigma_3$ sentences.
Could it be that every $\fo$ sentence whose finite models are
closed under extensions is equivalent to a $\Sigma_3$ sentence?  Or,
indeed, a $\Sigma_n$ sentence for some constant $n$?  We answer these
questions negatively in this paper.  That is, we show that we can
construct, for each $n$, a sentence $\phi$ whose finite models are
extension closed but which is not equivalent in the finite to a $\Sigma_n$
sentence.  

A related question is posed by Rosen and Weinstein~\cite{RosenW}.
They observe that the constructions due to Tait and Gurevich 
both yield classes of finite structures that are definable
in $\texttt{Datalog}(\neg)$, the existential fragment of
fixed-point logic in which only extension-closed properties can be expressed.
They ask if it might be the case that $\fo \cap
\texttt{Datalog}(\neg)$ is contained in some level of the
first-order quantifier alternation hierarchy, be it not the lowest level.  That is, could it be that every property that is first-order definable and also definable in $\Datalog$ is definable by a $\Sigma_n$ sentence for some constant $n$?
Our
construction answers this question negatively as we show that the
sentences  we construct are all equivalent to formulas of
$\texttt{Datalog}(\neg)$.

Our result also greatly strengthens a previous result by Sankaran~\cite{sankaran19} which showed that for each $k$ there is an extension-closed property of finite structures definable in $\fo$ but not in $\Pi_2$ with $k$ leading universal quantifiers.  Indeed, our result answers (negatively) Problem~2 in~\cite{sankaran19}.

In Section~\ref{section:preliminaries} we give the necessary
background definitions.  We construct the sentences in
Section~\ref{section:the-property-psi-n} and show that they can all be
expressed in $\Datalog$.  Section~\ref{section:proof} contains the
proof that the sequence of sentences contains, for each $n$, a sentence that is not equivalent to any $\Sigma_n$ sentence.  We conclude with some suggestions for further investigation.

\section{Preliminaries}\label{section:preliminaries}
We work with logics: first-order logic ($\fo$) and extensions of
$\texttt{Datalog}$ over finite relational vocabularies.  We assume the
reader is familiar with the basic definitions of first-order logic (see, for instance~\cite{libkin}).  A
\emph{vocabulary} $\tau$ is a set of predicate and constant symbols.
In the vocabularies we use, all predicate symbols are either unary or
binary.  We denote by $\fo(\tau)$ the set of all $\fo$ formulas over
the vocabulary $\tau$.  A sequence $(x_1, \ldots, x_k)$ of variables
is denoted by $\bar{x}$.  We use $\psi(\bar{x})$ to denote a formula
$\psi$ whose free variables are among $\bar{x}$.  A formula without
free variables is called a \emph{sentence}.  A formula which begins
with a string of quantifiers that is followed by a quantifier-free
formula, is said to be in \emph{prenex normal form (PNF)}.  The string
of quantifiers in a PNF formula is called the \emph{quantifier prefix}
of the formula.  It is well known that every formula is equivalent to
a formula in PNF.  We denote by $\Sigma_n$, the collection of all
formulas in PNF whose quantifier prefix contains at most $n$ blocks of
quantifiers beginning with a block of existential quantifiers.
Equivalently, a PNF formula is in $\Sigma_n$ if it starts with a block of
existential quantifiers and contains at most $n-1$ alternations in its
quantifier prefix.  Similarly, a formula is $\Pi_n$ if it begins with
a block of universal quantifiers and contains at most $n-1$
alternations in its quantifier prefix.  We write $\Sigma_{n, k}$ for
the subclass of $\Sigma_n$ consisting of those formulas in which every
quantifier block has at most $k$ quantifiers.  Similarly, $\Pi_{n, k}$
is the subclass of $\Pi_n$ where each block has at most $k$
quantifiers.  Thus $\Sigma_n = \bigcup_{k \ge 1} \Sigma_{n, k}$ and
$\Pi_n = \bigcup_{k \ge 1} \Pi_{n, k}$.

We use standard notions concerning $\tau$-structures as defined in~\cite{chang-keisler-short}.  We denote $\tau$-structures as $\str{A},
\str{B}$ etc., and refer to them simply as structures when $\tau$ is
clear from the context.  We denote by $\str{A} \subseteq \str{B}$ that
$\str{A}$ is a substructure of $\str{B}$, and by $\str{A} \cong
\str{B}$ that $\str{A}$ is isomorphic to $\str{B}$. 

We now introduce some notation with respect to the classes of formulas
$\Sigma_{n,k}$ and $\Pi_{n,k}$.
\begin{definition}\label{def:sigmank-relation}
  We say $\str{A} \Rrightarrow_{n, k} \str{B}$ if  every $\Sigma_{n, k}$
sentence true in $\str{A}$ is also true in $\str{B}$.

We say $\str{A}$ and $\str{B}$ are \emph{$\equiv_{n, k}$-equivalent},
and write $\str{A} \equiv_{n, k} \str{B}$, if  $\str{A} \Rrightarrow_{n, k}
\str{B}$ and $\str{B} \Rrightarrow_{n, k} \str{A}$.
\end{definition}
By extension, for tuples $\bar{a}$ and $\bar{b}$ of elements of $\str{A}$ and $\str{B}$ respectively, we also write $(\str{A},\bar{a}) \Rrightarrow_{n,k} (\str{B},\bar{b})$ to indicate that every formula $\phi$ which is satisfied in $\str{A}$ when its free variables are instantiated with $\bar{a}$ is also satisfied in $\str{B}$ when they are instantiated by $\bar{b}$, and similarly for $\equiv_{n,k}$.
Note that $\str{A} \Rrightarrow_{n, k} \str{B}$ holds if every $\Pi_{n, k}$
sentence true in $\str{B}$ is also true in $\str{A}$.  The following useful fact is now immediate from the definition.
\begin{lemma}\label{lem:sigmank-ind}
 $\str{A} \Rrightarrow_{n+1, k} \str{B}$ if, and only if, for every $k$-tuple $\bar{a}$ of elements of $\str{A}$, there is a $k$-tuple $\bar{b}$ of elements of $\str{B}$ such that $(\str{B},\bar{b}) \Rrightarrow_{n,k} (\str{A},\bar{a})$.
\end{lemma}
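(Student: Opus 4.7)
The plan is to verify both directions of the biconditional directly from Definition~\ref{def:sigmank-relation}, using the dual reformulation noted right before the lemma: $(\str{B},\bar{b}) \Rrightarrow_{n,k} (\str{A},\bar{a})$ is equivalent to every $\Pi_{n,k}$-formula satisfied by $\bar{a}$ in $\str{A}$ also being satisfied by $\bar{b}$ in $\str{B}$.

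For ($\Leftarrow$), I take an arbitrary $\Sigma_{n+1,k}$-sentence $\phi = \exists \bar{x}\,\psi(\bar{x})$ with $\psi \in \Pi_{n,k}$ and $|\bar{x}| \le k$ (padding $\bar{x}$ to length $k$ if necessary). If $\str{A} \models \phi$, fix a witness $\bar{a}$. The hypothesis supplies $\bar{b}$ with $(\str{B},\bar{b}) \Rrightarrow_{n,k} (\str{A},\bar{a})$; since $\psi \in \Pi_{n,k}$ and $\str{A} \models \psi(\bar{a})$, the dual form yields $\str{B} \models \psi(\bar{b})$, whence $\str{B} \models \phi$. This is essentially an unfolding of definitions.

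For ($\Rightarrow$), suppose $\str{A} \Rrightarrow_{n+1,k} \str{B}$ and fix a $k$-tuple $\bar{a}$ in $\str{A}$. Since the vocabulary is finite, $\Pi_{n,k}$-formulas in $k$ free variables have only finitely many equivalence classes, so the $\Pi_{n,k}$-theory of $\bar{a}$ is encodable by a single $\Pi_{n,k}$-formula $\chi_{\bar{a}}(\bar{x})$ --- a Hintikka-style characteristic formula built inductively on $n$, whose quantifier blocks are kept of size at most $k$. Then $\exists \bar{x}\,\chi_{\bar{a}}(\bar{x})$ is a $\Sigma_{n+1,k}$-sentence true in $\str{A}$ via $\bar{a}$ and hence true in $\str{B}$ by hypothesis; any witness $\bar{b}$ satisfies $\chi_{\bar{a}}$ and therefore every $\Pi_{n,k}$-formula in the type of $\bar{a}$, which is exactly $(\str{B},\bar{b}) \Rrightarrow_{n,k} (\str{A},\bar{a})$.

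The main obstacle is ensuring $\chi_{\bar{a}}$ genuinely sits inside $\Pi_{n,k}$: a naive conjunction of the formulas of the type would inflate inner existential blocks past $k$. The remedy is the standard inductive Hintikka construction that assembles $\chi_{\bar{a}}$ via $\forall \bar{y}\,\bigvee$ patterns at each level, using the distributivity equivalences $\bigvee_i \exists \bar{y}\,\phi_i \equiv \exists \bar{y}\,\bigvee_i \phi_i$ and $\bigwedge_i \forall \bar{y}\,\phi_i \equiv \forall \bar{y}\,\bigwedge_i \phi_i$ to prevent block growth, together with the finiteness of every lower-order type-space over the fixed vocabulary. The authors treat this step as immediate once the $\Sigma$/$\Pi$ duality is in hand.
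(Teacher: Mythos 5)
Your ($\Leftarrow$) direction is correct and is exactly the unfolding that the paper treats as ``immediate from the definition''; it is also the only direction that is actually used later (e.g.\ in the proofs of Lemmas~\ref{lem:sequence-sum}--\ref{lem:MNbase}, where a Duplicator response $\bar b$ is exhibited for each Spoiler move $\bar a$ and the conclusion $\Rrightarrow_{n+1,k}$ is drawn). The padding of the leading existential block to length $k$ and the $\Sigma/\Pi$ duality step are fine.

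The ($\Rightarrow$) direction, however, is not established by your sketch, and the gap is precisely the point you flag and then wave away. Finiteness of the number of $\Pi_{n,k}$ formulas up to equivalence is not in question; what needs proof is that the $\Pi_{n,k}$-type of $\bar a$ is captured by a \emph{single} formula $\chi_{\bar a}$ that is itself in $\Pi_{n,k}$. The inductive ``game-type'' construction for this asymmetric situation produces, at alternate levels, a finite conjunction indexed by the $k$-tuples of $\str{A}$ each of whose conjuncts begins with an existential block, i.e.\ subformulas of the shape $\bigwedge_{\bar a'\in A^k}\exists\bar z\,(\cdots)$ (and dually $\bigvee_{\bar a'\in A^k}\forall\bar z\,(\cdots)$). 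The two equivalences you invoke, $\bigvee_i\exists\bar y\,\phi_i\equiv\exists\bar y\bigvee_i\phi_i$ and $\bigwedge_i\forall\bar y\,\phi_i\equiv\forall\bar y\bigwedge_i\phi_i$, are exactly the combinations that do \emph{not} occur; the ones that do occur do not commute, and prenexing inflates the corresponding block to size roughly $k\cdot|A|^k$ (or $k$ times the number of inequivalent lower types). So the construction only yields $\chi_{\bar a}\in\Pi_{n,K}$ for some $K\gg k$, whence $\exists\bar x\,\chi_{\bar a}$ is only $\Sigma_{n+1,K}$ and the hypothesis $\str{A}\Rrightarrow_{n+1,k}\str{B}$ gives nothing. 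The contrapositive route (for each $\bar b$ pick a $\Pi_{n,k}$ formula in the type of $\bar a$ that $\bar b$ violates, and conjoin) hits the same wall. The obstruction is genuine: for $n=1$ the inner matrix is quantifier-free and the argument closes, but for $n\ge 2$ the classes $\Sigma_{n,k},\Pi_{n,k}$ are not closed under the Boolean combinations you need, and the ($\Rightarrow$) direction is essentially the hard direction of Theorem~\ref{thm:EF}, not a definitional unfolding. Since the paper itself supplies no argument for it and never uses it, the safe course is either to state and prove only the ($\Leftarrow$) implication, or to prove the converse with an explicit blow-up in the block-size parameter; as written, your Hintikka step does not go through.
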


We
assume the reader is familiar with the standard {\efgame} game
characterizing the equivalence of two structures with respect to
sentences of a given quantifier nesting depth (see for
example~\cite[Chapter~3]{libkin}).  In this
paper, we use a ``prefix'' variant of this \efgame
game.  For $n, k \ge 1$, the \emph{$(n, k)$-prefix \efgame game} on a pair
$(\str{A}, \str{B})$ of structures, is the usual \efgame game on $\str{A}$
and $\str{B}$ but with two restrictions: (i) In every odd
round, the Spoiler plays on $\str{A}$ and in every even round, 
on $\str{B}$, and (ii) in
each round the Spoiler chooses a $k$-tuple of elements from the
relevant structure (as opposed to a single element in the usual
\efgame game).  The winning condition for the Duplicator 
is the same as that in the usual \efgame game: Duplicator wins at the end of $n$
rounds, if when $\bar{a}_1, \ldots, \bar{a}_n$ are the $k$-tuples picked in
$\str{A}$ and $\bar{b}_1, \ldots, \bar{b}_n$ are the $k$-tuples picked
in $\str{B}$, then the map taking the $nk$-tuple $(\bar{a}_1\cdots\bar{a}_n)$ to $(\bar{b}_1\cdots\bar{b}_n)$ pointwise is a partial isomorphism from $\str{A}$ to $\str{B}$.  Entirely analogously to the
usual \efgame game theorem (see~\cite[Theorem~3.18]{libkin}), we have the
following. 
\begin{theorem}\label{thm:EF}
  Duplicator has a winning strategy in the \emph{$(n, k)$-prefix \efgame game} on a pair
$(\str{A}, \str{B})$ of structures if, and only if, $\str{A} \Rrightarrow_{n, k}
\str{B}$
\end{theorem}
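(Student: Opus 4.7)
The plan is to prove the theorem by induction on $n$, after strengthening the statement to arbitrary game positions. Concretely, I show: for all $n \geq 0$, all pairs of structures $(\str{C},\str{D})$, and all tuples $\tup{c}$ from $\str{C}$ and $\tup{d}$ from $\str{D}$ of equal length, Duplicator has a winning strategy in the $n$-round $(n,k)$-prefix \efgame game on $(\str{C},\str{D})$ starting from position $(\tup{c},\tup{d})$ (with Spoiler moving first on $\str{C}$ in round~$1$) if, and only if, $(\str{C},\tup{c}) \Rrightarrow_{n,k} (\str{D},\tup{d})$. Theorem~\ref{thm:EF} is then the special case of empty tuples.

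For the base case $n=0$, no moves are made, so by convention Duplicator wins iff the map $\tup{c}\mapsto\tup{d}$ is already a partial isomorphism. On the logic side, $\Sigma_{0,k}$ is just the class of quantifier-free formulas, so $(\str{C},\tup{c}) \Rrightarrow_{0,k} (\str{D},\tup{d})$ asserts that every atomic formula satisfied by $\tup{c}$ in $\str{C}$ is also satisfied by $\tup{d}$ in $\str{D}$; over a relational vocabulary this is exactly the partial isomorphism condition.

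For the inductive step, suppose the equivalence holds for $n$. By unfolding one round of the game, Duplicator wins the $(n+1)$-round game from $(\tup{c},\tup{d})$ iff for every $k$-tuple $\tup{c}'$ that Spoiler may choose in $\str{C}$, Duplicator has some response $\tup{d}'$ in $\str{D}$ such that she wins the remaining $n$-round game from $(\tup{c}\tup{c}',\tup{d}\tup{d}')$. Crucially, in this remaining game Spoiler must now play first on $\str{D}$, so it is precisely the $n$-round prefix game on the swapped pair $(\str{D},\str{C})$ starting from $(\tup{d}\tup{d}',\tup{c}\tup{c}')$. Applying the induction hypothesis to this swapped game, Duplicator wins it iff $(\str{D},\tup{d}\tup{d}') \Rrightarrow_{n,k} (\str{C},\tup{c}\tup{c}')$. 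Combining the two equivalences, Duplicator wins the $(n+1)$-round game from $(\tup{c},\tup{d})$ iff for every $\tup{c}'$ in $\str{C}$ there exists $\tup{d}'$ in $\str{D}$ with $(\str{D},\tup{d}\tup{d}') \Rrightarrow_{n,k} (\str{C},\tup{c}\tup{c}')$. By Lemma~\ref{lem:sigmank-ind} (applied to the structures $(\str{C},\tup{c})$ and $(\str{D},\tup{d})$ expanded with the constants $\tup{c}$ and $\tup{d}$), this is in turn equivalent to $(\str{C},\tup{c}) \Rrightarrow_{n+1,k} (\str{D},\tup{d})$, closing the induction.

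The main obstacle is purely one of bookkeeping: because Spoiler is forced to alternate between $\str{A}$ and $\str{B}$, each recursive invocation of the game swaps the two structures, and the induction hypothesis must be formulated symmetrically in $(\str{C},\str{D})$ so that the sub-game in rounds $2,\ldots,n+1$ can be treated as a fresh $n$-round prefix game on the reversed pair. Once this symmetric formulation is in place, the single-round unfolding of the game matches the single-round unfolding of the relation $\Rrightarrow_{n+1,k}$ supplied by Lemma~\ref{lem:sigmank-ind}, and the proof becomes a direct adaptation of the classical \efgame theorem.
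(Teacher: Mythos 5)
Your game-side bookkeeping is correct: strengthening the statement to arbitrary positions, unfolding one round, and viewing rounds $2,\ldots,n+1$ as a fresh $n$-round prefix game on the swapped pair $(\str{D},\str{C})$ is exactly the structure that the paper gestures at when it says the theorem follows ``entirely analogously'' to the classical \efgame theorem (the paper itself gives no further proof, so there is little to compare line by line). In particular the direction ``Duplicator has a winning strategy $\Rightarrow$ $\str{A}\Rrightarrow_{n,k}\str{B}$'' does come out of your induction, since the half of Lemma~\ref{lem:sigmank-ind} needed for it (the ``if'' half) is sound, and your reduction of the pointed relation to constant expansions is legitimate. Two small points: in the base case you should speak of preservation of all quantifier-free formulas (atomic \emph{and} negated atomic), since preservation of atomic formulas alone is a homomorphism condition, not the partial isomorphism condition; and $\Sigma_{0,k}$ is a convention you are adding, which is harmless.

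The genuine gap is in the converse direction, and it is concealed by your appeal to Lemma~\ref{lem:sigmank-ind}. Your inductive step uses the pointed form of that lemma as an equivalence, and in particular its ``only if'' half: if $(\str{C},\tup{c}) \Rrightarrow_{n+1,k} (\str{D},\tup{d})$ then for every $k$-tuple $\tup{c}'$ there is $\tup{d}'$ with $(\str{D},\tup{d}\tup{d}') \Rrightarrow_{n,k} (\str{C},\tup{c}\tup{c}')$. That is precisely where the nontrivial content of Theorem~\ref{thm:EF} lives, and it is not ``immediate'': the natural argument says that if no $\tup{d}'$ works, one conjoins, over the candidate tuples $\tup{d}'$, a separating $\Pi_{n,k}$ formula and prefixes $\exists\tup{x}$ to contradict the hypothesis; but a conjunction of $\Pi_{n,k}$ formulas is not in general equivalent to a $\Pi_{n,k}$ formula --- the leading universal blocks merge over conjunction, while the inner existential blocks of different conjuncts cannot share variables, so either the block sizes or the number of alternations grow beyond the bound $k$ once $n\ge 2$. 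This is exactly the closure property (Boolean combinations of rank-$m$ formulas remain rank $m$) that powers the Hintikka-formula proof of the classical theorem and that fails for the bounded-block classes $\Sigma_{n,k},\Pi_{n,k}$. So citing Lemma~\ref{lem:sigmank-ind} (which the paper states without proof) relocates rather than resolves the difficulty: as a self-contained proof of the ``only if'' direction of Theorem~\ref{thm:EF}, your argument is missing the one step where the adaptation of the classical theorem is not direct, namely an actual proof of the pointed forward direction of Lemma~\ref{lem:sigmank-ind}, or some other argument extracting from a Spoiler win a single $\Sigma_{n+1,k}$ sentence true in $\str{A}$ and false in $\str{B}$.
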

Note in particular that, given the fact that any
two linear orders of length $\ge 2^n$ are equivalent with respect to
all sentences of quantifier nesting depth $n$~\cite[Theorem~3.6]{libkin}, it
follows that there is a winning strategy for the Duplicator in the
$(n, k)$-prefix \efgame game on any pair of linear orders, each of length
$\ge 2^{n \cdot k}$ and any two such linear orders are
$\equiv_{n, k}$-equivalent.

Where it causes no confusion, we still use $\equiv_m$ to denote the usual equivalence up to \emph{quantifier rank} $m$.  Note that $\str{A} \equiv_m \str{B}$ implies $\str{A} \equiv_{n,k} \str{B}$ whenever $m \geq nk$.

Formulas in $\Sigma_1$ are said to be \emph{existential}.  A $\Sigma_1$ formula that also contains no occurrences of the negation symbol is said to be \emph{existential positive}.  It is easy to see that the class of models of any $\Sigma_1$ sentence $\phi$ is closed under extensions: if $\str{A} \models \phi$ and $\str{A} \subseteq \str{B}$, then $\str{B} \models \phi$.  Dually, the class of models of any $\Pi_1$ sentence is closed under taking substructures.  Similarly, the class of models of any existential positive sentence is closed under homomorphisms.

$\texttt{Datalog}$ is a database query language which can be seen as
an extension of existential positive first-order logic with a
recursion mechanism.  Equivalently, it can be seen as the
existential positive fragment of the logic of least fixed points
$\texttt{LFP}$ (see~\cite[Chapter~10]{libkin}).   We briefly review the definitions of this language, along with its extension $\texttt{Datalog}(\neg)$.

A \emph{Datalog program} is a finite set of rules of the form $T_0 \leftarrow T_1,\ldots,T_m$, where each $T_i$ is an atomic formula.  $T_0$ is called the \emph{head} of the rule, while the right-hand side is called the \emph{body}.  These atomic formulas use relational symbols from a vocabulary $\sigma \cup \tau$, where the symbols in $\sigma$ are called \emph{extensional} predicates and those in $\tau$ are \emph{intensional} predicates.  Every symbol that occurs in the head of a rule is an intensional predicate, while both intensional and extensional predicates can occur in the body of a rule.  The semantics of such a program is defined with respect to a $\sigma$-structure $\str{A}$.  Say that a rule $T_0 \leftarrow T_1,\ldots,T_m$ is satisfied in a $\sigma \cup \tau$ expansion $\str{A}'$ of $\str{A}$ if $\str{A}' \models \forall \tup{x} \big((\bigwedge_{1\leq i \leq m} T_i) \rightarrow T_0\big)$, where $\tup{x}$ enumerates all the variables occurring in the rule.  The interpretation of a $\texttt{Datalog}$ program in $\str{A}$ is the smallest expansion of $\str{A}$ (when ordered by pointwise inclusion of the relations interpreting $\tau$) satisfying all the rules in the program.  This is uniquely defined as it is obtained as the simultaneous least fixed-point of the existential closure of the right-hand side of the rules.  We distinguish one intensional predicate $G$ and call it the \emph{goal predicate}.  Then, the query computed by a program $\pi$ is the interpretation of $G$ in the interpretation of $\pi$ in $\str{A}$.  In particular, if $G$ is a $0$-ary predicate symbol (i.e.\ a Boolean variable), $\pi$ defines a Boolean query,  i.e.\ a class of structures.

Since the interpretation of  $\pi$ is obtained as the least fixed-point of an existential positive formula, it is easily seen that the query defined is closed under homomorphisms and hence also under extensions.  We can understand $\texttt{Datalog}$ as the existential positive fragment of the least-fixed point logic LFP, though it is known that there are homomorphism-closed properties definable in LFP that are not expressible in $\texttt{Datalog}$ (see~\cite{DK08}).

We get more general queries by allowing limited forms of negation.  Specifically, in $\texttt{Datalog}(\neg)$, in a rule $T_0 \leftarrow T_1,\ldots,T_m$, each $T_i$ on the right-hand side is either an atom \emph{or} a negated atom involving an \emph{extensional} predicate symbol or equality.  In short, we allow negation on the predicate symbols in $\sigma$ and on equalities but the fixed-point variables (i.e.\ the predicate symbols in $\tau$) still only appear positively, so the least fixed-point is still well defined.  As it is still the least-fixed point of existential formulas, the formula still defines a property closed under extensions.   For more on the extensions of $\texttt{Datalog}$ with negation, see~\cite{AHV-book,KV-datalog}.

\section{The Extension-Closed Properties}\label{section:the-property-psi-n}

We now construct a family of properties, each of which is definable
in first-order logic and closed under extensions.  Indeed, we show
that each of the properties is definable in $\Datalog$. 

\subsection{First-Order Definitions}
To begin, we define, for each $n \in \nats$, a vocabulary $\sigma_n$.
These are defined by induction on $n$.  The vocabulary $\sigma_1$
consists of three binary relation symbols $\leq,S,R$.  For all $n >
1$, $\sigma_{n} = \sigma_{n-1} \cup \{S_n,R_n,P_n\}$ where $S_n$ and
$R_n$ are binary relation symbols and $P_n$ is a unary relation
symbol.

Consider first the sentence $\predicate{NLO}$ of $\fo$ which asserts
that $\leq$ is \emph{not} a linear order.  This is easily seen to be an existential
sentence and so also definable in $\Datalog$.  Suppose now that $\phi$
is any sentence whose models, restricted to ordered structures (i.e.\ those structures which
interpret $\leq$ as a linear order), are extension closed.  Then, it
follows that $\predicate{NLO} \lor \phi$ defines an extension-closed
class of structures.  Moreover, this class is $\fo$ or $\Datalog$
definable if $\phi$ is in the respective logic.  Also, if $\phi$ is a $\Sigma_n$ sentence, then so is $\predicate{NLO} \lor \phi$, and if $n> 1$ and $\phi$ is a $\Pi_n$ sentence then so is $\predicate{NLO} \lor \phi$.  Thus, in what
follows, we restrict our attention to the class of ordered structures.
We construct our sentences on the assumption that structures are
ordered, and show that they define extension-closed classes on ordered
structures. 

With this in
mind, we use some convenient notational abbreviations.  We write $x <
y$ as short-hand for $x \leq y \land x \neq y$.  We also write ``$y$ is
the successor of $x$'', ``$x$ is the minimum element'', etc.\ with
their obvious meanings.  Also, let $\phi$ be any formula of $\fo$, and
$x$ and $y$ be variables not occuring in $\phi$.  We write
$\phi^{[x,y]}$ for the formula $x \leq y \land \phi^{\star}$ where
$\phi^{\star}$ is the formula obtained by relativizing every quantifier
in $\phi$ to the interval $[x,y]$.  That is to say, inductively, every
subformula $\exists z \theta$ is replaced by $\exists z( x \leq z
\land z \leq y) \land \theta^{\star}$ and every subformula  $\forall z
\theta$ by  $\forall z( x \leq z \land z \leq y) \ra \theta^{\star}$.
Where the variables $x$ and $y$ do appear in $\phi$, the formula $\phi^{[x,y]}$ is
defined by first renaming variables in $\phi$ to avoid clashes and then
applying the relativization.

Next, consider the sentence $\PartialSucc$ defined as
follows. 
\[\PartialSucc \; := \;  \forall x \forall y ~ S(x,y) \rightarrow \text{``$y$ is the successor of $x$''}. \]
This is a $\Pi_1$ sentence, asserting that the relation $S$ is a
``partial successor'' relation.  Its negation is an existential
sentence and hence closed under extensions.  Thus, if $\phi$ defines
an extension-closed class of structures when restricted to ordered
structures in which $S$ is a partial successor relation, then
$\predicate{NLO} \lor \neg \PartialSucc \lor \phi$ defines
an extension-closed class. 

We write $\Total(x,y)$ for the formula that asserts, on structures in which $\PartialSucc$ is true, that in the
interval $[x,y]$, $S$ is, in fact, total.  That is:
\[\Total(x,y) \; := \; x < y \land \forall z \big(x \leq z \wedge z <
    y\big) \rightarrow \exists w \big(z < w \wedge w \leq y \wedge
    S(z, w) \big). \]

Now, we can define the sentence $\SomeTotalR_1$.
\[ \SomeTotalR_1 \; := \; \neg \PartialSucc \lor \exists x \exists y \, \big(R(x,y) \land
\Total(x,y)\big).\]

Note that $\Total(x,y)$ is a $\Pi_2$ formula, and $\SomeTotalR_1$
is a $\Sigma_3$ sentence.  The latter is the first in our family of
sentences.  To see why this sentence is closed under extensions on ordered structures, suppose $\str{A}$ is an ordered model of $\SomeTotalR_1$ on which $S$ is a partial successor.   Thus, there is an interval $[x,y]$ in $\str{A}$ on which $S$ is total.  Let $\str{B}$ be an extension of $\str{A}$.  If $\str{B}$ contains no additional elements in the interval $[x,y]$, then $\Total(x,y)$ still holds in $\str{B}$ and therefore $\str{B}$ is a model of $\SomeTotalR_1$.  On the other hand, suppose $\str{B}$ contains an additional element $w$ in the interval $[x,y]$.  Let $a$ and $b$ be the two successive elements of $\str{A}$ between which $w$ appears.  Since $S$ is total in the interval $[x,y]$ in $\str{A}$, we know that $S(a,b)$ holds in $\str{A}$ and, by extension, in $\str{B}$.  Since $b$ is not the successor of $a$ in $\str{B}$, we conclude that $\neg \PartialSucc$ is true in $\str{B}$ and therefore the structure is a model of $\SomeTotalR_1$.

The sentence $\SomeTotalR_1$ is essentially the example constructed by Tait that exhibits an existential-closed first-order property that is not expressible by an existential sentence.
We now define $\sigma_n$-sentences $\SomeTotalR_n$, for
$n> 0$ by induction.

First, we define a formula $\Succ_n(x,y)$ as follows.
\[ \Succ_n(x,y) \; := \; P_n(x) \land P_n(y) \land S_n(x, y) \land \SomeTotalR_{n-1}^{[x,y]}. \]
We further define the formula $\PartialSucc_n$ which asserts that
$\Succ_n$ is a partial successor relation when restricted to the
elements in the relation $P_n$.  That is, 
\[ \PartialSucc_n \; := \; \forall x \forall y ~ \Succ_n(x,y) \ra
\forall z ( P_n(z) \ra z \leq x \lor y \leq z). \]

We can now define the formula $\Total_n(x,y)$ which defines, in those
structures in which $\PartialSucc_n$ is true, those intervals $[x,y]$
where the successor defined by $\Succ_n$ is total.  That is, 
\[\Total_n(x,y) \; := \; x < y \land \forall z \big(P_n(z) \land x \leq z \wedge z <
    y\big) \rightarrow \exists w \big(z < w \wedge w \leq y \wedge
    \Succ_n(z, w) \big). \]

Finally, we define the sentence 
\[ \SomeTotalR_n \; := \; \neg \PartialSucc_n \lor \exists x \exists
y \, \big( R_n(x,y) \land \Total_n(x,y) \big).\]

Note that, $\SomeTotalR_n$ is a $\Sigma_{2n+1}$ sentence.  This can be
established by induction on $n$.  Indeed, as we noted, $\SomeTotalR_1$
is a $\Sigma_3$ sentence.  Assuming $\SomeTotalR_n$ is a $\Sigma_{2n+1}$
sentence for some $n$, we note that $\Succ_{n+1}$ is a
$\Sigma_{2n+1}$ formula, and so is $\neg \PartialSucc_{n+1}$.  Then
$\Total_{n+1}$ is a $\Pi_{2n+2}$ formula and $\SomeTotalR_{n+1}$ is
$\Sigma_{2n+3}$. 

\subsection{Datalog Definitions}
Next, we show that these formulas also admit a definition in
$\Datalog$, which establishes, in particular, that they define
extension-closed classes.  We use the same names for formulas in
$\Datalog$ as we used for $\fo$ formulas above, when they define the
same property.  As we noted, the sentences $\predicate{NLO}$ and
$\neg\PartialSucc$ are both $\Sigma_1$ sentences and we therefore
assume they are available as $\Datalog$ predicates.  We now define
$\Total$ by the following rules.
\[
\begin{array}{l@{\hspace{2mm}}c@{\hspace{2mm}}l}
 \Rule{\Total(x,y)}{S(x,y)} \\
 \Rule{\Total(x,y)}{S(x,z),\Total(z,y)}
\end{array}
\]
This just defines $\Total$ as the transitive closure of $S$.  It
is clear that, in ordered structures where $S$ is a partial successor
relation, the pair $(x,y)$ is in the transtive closure of $S$ precisely when $x
< y$ and $S$ is total in the interval $[x,y)$.  Thus, we can now
define:
\[
\begin{array}{l@{\hspace{2mm}}c@{\hspace{2mm}}l}
 \Rule{\RTotal_1(x,y)}{x\leq u, v \leq y, R(u,v),\Total(u,v)} 
\end{array}
\]
This defines those pairs $(x,y)$ such that for some $u,v$ in the
interval $[x,y]$, $R(u,v)$ holds and the successor relation is total.
In other words, it defines $\SomeTotalR_1^{[x,y]}$.  We can obtain
$\SomeTotalR_1$ as the existential closure of this.  For the inductive
definition, the predicate $\RTotal_n$ is useful.

Inductively, we define the relation, $\Succ_n$ as follows.
\[
\begin{array}{l@{\hspace{2mm}}c@{\hspace{2mm}}l}
 \Rule{\Succ_{n}(x,y)}{P_n(x),P_n(y),S_n(x, y),\RTotal_{n-1}(x,y)}
\end{array}
\]
The negation of $\PartialSucc_n$ is now defined by the following
\[
\begin{array}{l@{\hspace{2mm}}c@{\hspace{2mm}}l}
 \Rule{\NotPartialSucc_{n}}{\Succ_n(x,y),P_n(z), x\leq z, z\leq y, x\neq
  z, y\neq z}\\
\end{array}
\]

Now, entirely analogously to $\Total$ above, we can give a definition of
$\Total_n$ as the transitive closure of $\Succ_n$ and this is
equivalent to the $\fo$ definition given above on ordered structures
on which $\PartialSucc_n$ is true.
\[
\begin{array}{l@{\hspace{2mm}}c@{\hspace{2mm}}l}
 \Rule{\Total_{n}(x,y)}{\Succ_{n}(x,y)} \\
 \Rule{\Total_{n}(x,y)}{\Succ_{n}(x,z),\Total_{n}(z,y)}
\end{array}
\]
Inductively we define the relation $\RTotal_n$, and its existential
closure, giving the sentence $\SomeTotalR_n$.
\[
\begin{array}{l@{\hspace{2mm}}c@{\hspace{2mm}}l}
 \Rule{\RTotal_n(x,y)}{x\leq u, v \leq y, R_n(u,v),\Total_n(u,v)}
\end{array}
\]

It should be noted that the only use of the recursive features of $\Datalog$ that we made use in writing the formulas above was to define the transitive closure of the relations $\Total$  and $\Total_n$.  Thus, the definitions could equally well be formalized in the existential fragment of transitive closure logic. 

\section{Proof of the Main Result}\label{section:proof}

In this section, we establish our main result.  We establish that $\SomeTotalR_n$, which we noted is a $\Sigma_{2n+1}$ sentence, is not
equivalent to a $\Pi_{2n+1}$ sentence.   To do this, we construct ordered structures $\str{M}_{n,k}$ and $\str{N}_{n,k}$ for every $k$ such that $\str{M}_{n,k}$ is a model of  $\SomeTotalR_n$, $\str{N}_{n,k}$ is not a model of  $\SomeTotalR_n$ but $\str{N}_{n,k} \Rrightarrow_{2n+1,k} \str{M}_{n,k}$.  The main lemma establishing this is Lemma~\ref{lem:main} below.  Here we state the theorem that is a consequence.
\begin{theorem}\label{thm:main}
  For every $n$, there is a $\Sigma_{2n+1}$ sentence whose finite models are closed under extensions and which is equivalent to a $\Datalog$ program, but which is not equivalent over finite structures to any $\Pi_{2n+1}$ sentence.
\end{theorem}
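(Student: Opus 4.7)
The plan is to take the sentence $\SomeTotalR_n$ as the explicit witness for the theorem.  We have already established in Section~\ref{section:the-property-psi-n} that $\SomeTotalR_n$ is a $\Sigma_{2n+1}$ sentence (by the straightforward inductive quantifier count), that it can be expressed as a $\Datalog$ program (by the rules defining $\Total_n$, $\RTotal_n$, and $\SomeTotalR_n$), and that, in virtue of its $\Datalog$ definition, its class of finite models is closed under extensions.  So three of the four clauses of the theorem are already in hand, and the entire task reduces to showing that, over finite structures, $\SomeTotalR_n$ is not equivalent to any $\Pi_{2n+1}$ sentence.

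For that last clause I would argue by contradiction.  Suppose $\psi \in \Pi_{2n+1}$ is equivalent to $\SomeTotalR_n$ over finite structures.  Since $\Pi_{2n+1} = \bigcup_{k \geq 1} \Pi_{2n+1,k}$, there is some $k$ with $\psi \in \Pi_{2n+1,k}$.  Now invoke Lemma~\ref{lem:main} with this $n$ and $k$ to obtain finite ordered structures $\str{M}_{n,k}$ and $\str{N}_{n,k}$ satisfying (i)~$\str{M}_{n,k} \models \SomeTotalR_n$, (ii)~$\str{N}_{n,k} \not\models \SomeTotalR_n$, and (iii)~$\str{N}_{n,k} \Rrightarrow_{2n+1,k} \str{M}_{n,k}$.

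By the equivalent reformulation noted immediately after Definition~\ref{def:sigmank-relation}, property (iii) means that every $\Pi_{2n+1,k}$ sentence true in $\str{M}_{n,k}$ is also true in $\str{N}_{n,k}$.  Applying this to $\psi$: by (i) and the supposed equivalence, $\str{M}_{n,k} \models \psi$; since $\psi \in \Pi_{2n+1,k}$, we get $\str{N}_{n,k} \models \psi$; and invoking the equivalence once more, $\str{N}_{n,k} \models \SomeTotalR_n$, contradicting (ii).  This closes the proof modulo Lemma~\ref{lem:main}.

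The real work is therefore concentrated in Lemma~\ref{lem:main}, and that is the step I expect to be the main obstacle.  The construction should proceed inductively on $n$, mirroring the inductive definition of $\SomeTotalR_n$: at the top level one pairs a structure in which some $R_n$-labelled interval is spanned by a $\Succ_n$-chain that is total (so $\SomeTotalR_n$ holds) with one in which every such interval is missing exactly one $\Succ_n$-edge (so $\SomeTotalR_n$ fails), while making the ambient $\leq$-order much longer than $2^{(2n+1)k}$ so that raw order-indistinguishability is automatic.  By Theorem~\ref{thm:EF} it suffices to exhibit a Duplicator winning strategy in the $(2n+1,k)$-prefix Ehrenfeucht--Fra\"iss\'e game on $(\str{N}_{n,k},\str{M}_{n,k})$; the strategy has to move between $\str{N}_{n,k}$ and $\str{M}_{n,k}$ in odd/even rounds and, crucially, use the two remaining rounds inside each relativized interval to answer Spoiler's bid on $R_n$ together with a challenge to $\Succ_n$-totality, where it recursively hands off to the $(2n-1,k)$-prefix game supplied by the inductive hypothesis.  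Verifying that $k$-tuples picked by Spoiler can always be matched by $k$-tuples from the other structure without breaking any lower-level $\Succ_m$/$R_m$/$P_m$ atoms, and that the recursion bottoms out cleanly at $n=1$ with the Tait-style construction reflected in $\SomeTotalR_1$, is the delicate combinatorial core of the argument.
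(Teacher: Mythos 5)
Your deduction of the theorem from Lemma~\ref{lem:main} is exactly the paper's: pick $k$ with $\psi \in \Pi_{2n+1,k}$, use $\str{M}_{n,k} \models \SomeTotalR_n$, $\str{N}_{n,k} \not\models \SomeTotalR_n$ (Section~\ref{sec:construct}) and $\str{N}_{n,k} \Rrightarrow_{2n+1,k} \str{M}_{n,k}$ (Lemma~\ref{lem:main}) to get the contradiction. That part is fine, and deferring the construction and game argument to the lemma is also what the paper does, although your sketch of that lemma is only a sketch.

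The genuine gap is your choice of witness and the claim that its models are closed under extensions ``in virtue of its Datalog definition.'' The first-order sentence $\SomeTotalR_n$ by itself is \emph{not} extension-closed over arbitrary finite structures, and it is \emph{not} equivalent to the $\Datalog$ program over arbitrary finite structures: the $\Datalog$ predicate $\Total$ is the transitive closure of $S$, which coincides with the first-order formula $\Total(x,y)$ only on structures where $\leq$ is a linear order and $S$ a partial successor. Concretely, take an ordered model of $\SomeTotalR_1$ whose witness interval $[x,y]$ carries a total successor chain, and extend it by one new element $u$ with $x \leq u$ and $u \leq y$ but incomparable to the interior elements and with no $S$-edges. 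Then $\leq$ is no longer linear, $\PartialSucc$ still holds (no $S$-pair acquires an element strictly between its endpoints), and $\Total(x,y)$ fails at $z = u$; so the first-order $\SomeTotalR_1$ becomes false in an extension, while the $\Datalog$ program stays true. This is precisely why the paper's witness is $\predicate{NLO} \vee \SomeTotalR_n$: the disjunct $\predicate{NLO}$ absorbs exactly the extensions that destroy the linear order, and only with it do the first-order sentence and the $\Datalog$ program define the same extension-closed class (and $\predicate{NLO} \vee \SomeTotalR_n$ is still $\Sigma_{2n+1}$). The repair costs nothing on the inexpressibility side: $\str{M}_{n,k}$ and $\str{N}_{n,k}$ are ordered, so $\predicate{NLO}$ is false in both and your contradiction argument goes through verbatim for the corrected witness.
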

\begin{proof}
  The sentence is  $\predicate{NLO} \vee \SomeTotalR_n$ which we have already noted is a $\Sigma_{2n+1}$ sentence, expressible as a $\Datalog$ program and its models are extension-closed.  Suppose it were expressible as a  $\Pi_{2n+1}$ sentence.  Then, since it is satisfied in $\str{M}_{n,k}$ as we show in Section~\ref{sec:construct} and since $\str{N}_{n,k} \Rrightarrow_{2n+1,k} \str{M}_{n,k}$ by Lemma~\ref{lem:main} we have that the sentence is true in $\str{N}_{n,k}$.   But, as we show in Section~\ref{sec:construct}, $\str{N}_{n,k}$ is not a model of $\SomeTotalR_n$, yielding a contradiction.
\end{proof}

\subsection{Construction of the Structures}\label{sec:construct}

We describe the construction of structures $\str{M}_{n,k}$ and $\str{N}_{n,k}$ for each $n$ and $k$.  The construction is by induction on $n$, simultaneously for all $k$.   In the course of the construction we also define, for all $n$ and $k$ structures $\Tot_{n,k}$ and $\Gap_{n,k}$ which we use as auxilliary structures.  For all $n$ and $k$,  $\str{M}_{n,k},\str{N}_{n,k}, \Tot_{n,k}$ and $\Gap_{n,k}$ are structures over the vocabulary $\sigma_n$.

All structures we consider interpret the relation symbol $\leq$ as a linear order of the universe and $S$ as a partial successor relation.  It is useful to formally define the notion of an ordered sum of
structures.  For a pair $\str{A}$ and $\str{B}$ of ordered structures the \emph{ordered sum} $\str{A} \oplus \str{B}$ is a
structure whose universe is the disjoint union of the universes of
$\str{A}$ and $\str{B}$ \emph{except} that the maximum element of
$\str{A}$ is identified with the minimum element of $\str{B}$.  The relation $\leq$
is interpreted in  $\str{A} \oplus \str{B}$ by taking the union of its
interpretations in $\str{A}$ and $\str{B}$ and letting $a \leq b$ for all
$a$ in $\str{A}$ and $b$ in $\str{B}$.  All other relation symbols are interpreted in  $\str{A} \oplus \str{B}$ by the union
of their interpretations in the two structures.  The operation of
ordered sum is clearly associative and we can thus write $\bigoplus_{i
  \in I} \str{A}_i$ for the ordered sum of a sequence of structures
indexed by an ordered set $I$.  Note that our use of the term ``ordered sum'' differs somewhat from its use, say by Ebbinghaus and Flum~\cite[Sec.~1.A.3]{EF99}.  The key difference is that in their definition we do not identify the maximum element of $\str{A}$ with the minimum element of $\str{B}$ but rather simply take the disjoint union of the two universes.

The structure $\Tot_{1,k}$ has $m=6(k+2)^2$ elements which we identify with the initial segment of the positive integers $[1,\ldots,m]$ with $\leq$ the natural linear order on these, $S$ the successor relation and the relation $R$ containing just the pair $(1,m)$.  The structure $\Gap_{1,k}$ is obtained from $\Tot_{1,k}$ by removing from the relation $S$ the \emph{central} pair of elements, i.e.\ $(m/2,m/2+1)$.

We now obtain $\str{N}_{1,k}$ as the ordered sum of $4(k+3)^3+2k+1$ copies of $\Gap_{1,k}$.  That is $\str{N}_{1,k} =  \bigoplus_{i \in [4(k+3)^3+2k+1]}\str{G}_i$ where each  $\str{G}_i$ is isomorphic to $\Gap_{1,k}$.  We also let $\str{M}_{1,k} = \bigoplus_{i \in [2(k+3)^3+k]}\str{G}_i \oplus \Tot_{1,k}\oplus \bigoplus_{i \in [2(k+3)^3+k]}\str{G}_i$.  In short, $\str{M}_{1,k}$ is obtained from $\str{N}_{1,k}$ by replacing the central copy of $\Gap_{1,k}$ with a copy of $\Tot_{1,k}$.

Let now $n \ge 2$ and suppose we have defined the $\sigma_{n-1}$-structures $\str{N}_{n-1,k},\str{M}_{n-1,k}, \Tot_{n-1,k}$ and $\Gap_{n-1,k}$.  Write $\str{N}_{n-1,k}^+$ and $\str{M}_{n-1,k}^+$ for the $\sigma_{n}$-structures that are obtained from $\str{N}_{n-1,k}$ and $\str{M}_{n-1,k}$ respectively by interpreting $P_{n}$ as the two element set $\{\mathrm{min},\mathrm{max}\}$ containing the minimum and maximum elements of the structure and $S_{n}$ as the relation containing the single pair $(\mathrm{min},\mathrm{max})$ ($R_{n}$ is empty in both these structures).
  Now, $\Tot_{n,k}$ is the structure obtained from $\bigoplus_{i\in [4(k+3)^{2n}+2k +1]}\str{M}_{n-1,k}^+$ (i.e.\ the ordered sum of $4(k+3)^{2n}+2k+1$ copies of $\str{M}_{n-1,k}^+$) by adding to the relation $R_{n}$ the pair relating the minimum and maximum elements of the linear order.  Similarly $\Gap_{n,k}$ is obtained from $\bigoplus_{i\in [2(k+3)^{2n}+k ]}\str{M}_{n-1,k}^+\oplus \str{N}_{n-1,k}^+ \oplus \bigoplus_{i\in [2(k+3)^{2n}+k ]}\str{M}_{n-1,k}^+$  by adding to the relation $R_{n}$ the pair relating the minimum and maximum elements of the linear order.  Equivalently, $\Gap_{n,k}$ is obtained from $\Tot_{n,k}$ by replacing the central copy of $\str{M}_{n-1,k}$ by a copy of $\str{N}_{n-1,k}$.

  Finally, we can define $\str{N}_{n,k}$ as the ordered sum of $4(k+3)^{2n+1}+2k+1$ copies of $\Gap_{n,k}$ and $\str{M}_{n,k}$ as the structure obtained from $\str{N}_{n,k}$ by replacing the central copy of $\Gap_{n,k}$ by a copy of $\Tot_{n,k}$.  
  This completes the definition of the structures.

  We now argue that for all values of $n$ and $k$, $\str{M}_{n,k}$ is a model of $\SomeTotalR_n$ and $\str{N}_{n,k}$ is not. 
  This is an easy induction on $n$.  For $n=1$, every interval $[x,y]$ of $\str{N}_{1,k}$ for which $R(x,y)$ holds induces a copy of $\Gap_{1,k}$.  By construction $S$ is not a complete successor relation in $\Gap_{1,k}$, and so $\str{N}_{1,k}$ does not satisfiy $\SomeTotalR_1$.  On the other hand, $\str{M}_{1,k}$ contains an interval $[x,y]$ with $R(x,y)$ that induces a copy of $\Tot_{1,k}$ and so $\str{M}_{1,k} \models \SomeTotalR_1$.

  Inductively, assume that $\str{M}_{n-1,k} \models \SomeTotalR_{n-1}$ and $\str{N}_{n-1,k} \not\models \SomeTotalR_{n-1}$.  Now, in both $\Tot_{n,k}$ and $\Gap_{n,k}$, the relation $S_n$ relates successive elements that are in $P_n$.  If $x,y$ is a pair of such successive elements then in $\Tot_{n,k}$ the interval $[x,y]$ always induces a structure whose $\sigma_{n-1}$-reduct is a copy of $\str{M}_{n-1,k}$ and therefore satisfies $\SomeTotalR_{n-1}$.  Hence $\Succ_n(x,y)$ is satisfied in $\Tot_{n,k}$ for all such pairs.  On the other hand, in $\Gap_{n,k}$ there is an interval $[x,y]$ with $S_n(x,y)$ which induces a structure whose $\sigma_{n-1}$-reduct is a copy of $\str{N}_{n-1,k}$ and therefore fails to satisfy $\SomeTotalR_{n-1}$.   Hence $\Total_n(x_0,y_0)$ is true in $\Tot_{n,k}$ and false in $\Gap_{n,k}$ when $x_0$ and $y_0$ are interpreted as the minimum and maximum elements in the structure respectively.  Since in $\str{N}_{n,k}$ all intervals $[x,y]$ for which $R_n(x,y)$ holds induce a copy of $\Gap_{n,k}$ and in $\str{M}_{n,k}$ there is such an interval which induces a copy of $\Tot_{n,k}$, we conclude that $\str{M}_{n,k} \models \SomeTotalR_{n}$ and $\str{N}_{n,k} \not\models \SomeTotalR_{n}$.

\subsection{The Game Argument}\label{sec:game}

Our aim in this section is to establish the following lemma using an \efgame game argument:
\begin{lemma}\label{lem:main}
  For each $n,k$, $\str{N}_{n,k} \Rrightarrow_{2n+1,k} \str{M}_{n,k}$.
\end{lemma}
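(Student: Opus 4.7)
The plan is to prove the lemma by induction on $n$ via Theorem~\ref{thm:EF}, by exhibiting a winning Duplicator strategy in the $(2n+1,k)$-prefix \efgame game on $(\str{N}_{n,k},\str{M}_{n,k})$, with Spoiler playing in $\str{N}_{n,k}$ on odd rounds and in $\str{M}_{n,k}$ on even rounds. The construction of the two structures is tailored for a Feferman-Vaught style decomposition: at every level both structures are ordered sums of many isomorphic blocks, with $\str{M}_{n,k}$ differing from $\str{N}_{n,k}$ only in its central block (a $\Tot_{n,k}$ in place of a $\Gap_{n,k}$), and each such block is itself an ordered sum of copies of $\str{M}_{n-1,k}^+$ with (in the $\Gap_{n,k}$-case) the central copy replaced by an $\str{N}_{n-1,k}^+$. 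The argument will exploit this two-level self-similarity on every round.

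Duplicator's strategy maintains, across rounds, an order-preserving partial \emph{block-correspondence} between the $\Gap_{n,k}$-blocks of $\str{N}_{n,k}$ and the blocks of $\str{M}_{n,k}$. Each round's $k$-tuple lies in at most $k$ blocks on Spoiler's side; Duplicator extends the correspondence by assigning each newly touched block a partner of matching \emph{type} (where $\Gap_{n,k}$ is paired with $\Gap_{n,k}$, and the unique central $\Tot_{n,k}$ of $\str{M}_{n,k}$ is paired with some non-central $\Gap_{n,k}$ of $\str{N}_{n,k}$) and of matching relative position among previously matched blocks. The count $4(k+3)^{2n+1}+2k+1$ of top-level blocks is so much larger than the total $k(2n+1)$ of plays in the game that such a partner can always be found well-separated from earlier matched blocks, so the order-type of block indices agrees on both sides. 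Within a matched pair of equal-type blocks the two pieces are isomorphic and the residual sub-game is trivial; within the one mixed-type matched pair the same decomposition is applied one level down, and the residual sub-game on the central sub-blocks is an instance of the same setup on $(\str{N}_{n-1,k}^+, \str{M}_{n-1,k}^+)$, to which the inductive hypothesis $\str{N}_{n-1,k} \Rrightarrow_{2n-1,k} \str{M}_{n-1,k}$ applies with the remaining $2n-1$ rounds. The base case $n=1$ reduces, within each matched block, to the classical fact that two long enough linear orders are $\equiv_{n,k}$-equivalent, together with straightforward handling of the unary/binary auxiliary predicates $S$ and $R$.

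The main technical obstacle will be the careful formalization of the block-correspondence invariant so that it can be extended on every round regardless of how Spoiler distributes his $k$ moves among old and new blocks, together with a quantitative account of the number of ``unused'' blocks remaining at each level of descent. The counts $4(k+3)^{2n+1}+2k+1$ are calibrated so that, after paying at most $k$ new blocks per round across $2n+1$ rounds and at most one further block-pair at each of $n$ nested levels of recursion, the supply is never exhausted and the ``slack'' needed to match relative order survives. A secondary but essential subtlety is that the direction of $\Rrightarrow$ must be preserved as the game descends into central sub-blocks: because the central block of $\str{M}_{n,k}$ carries an $\str{M}$-like sub-centre and the matching central block of $\str{N}_{n,k}$ carries an $\str{N}$-like sub-centre, Spoiler — still playing on $\str{N}$ in odd rounds — is again on the $\str{N}$-side upon descent, so the inductive statement is applied with roles unchanged. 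Tracking this role-preservation, and verifying that the ``peeling'' of one recursion level per descent matches the loss of exactly two rounds (from $2n+1$ to $2n-1$), is the heart of the bookkeeping.
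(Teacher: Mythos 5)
Your plan is essentially the paper's own proof: an induction on $n$ driven by Feferman--Vaught-style composition over ordered sums, in which the central $\Tot_{n,k}$ block of $\str{M}_{n,k}$ is matched against an untouched $\Gap_{n,k}$ block of $\str{N}_{n,k}$, each level of descent costs exactly two rounds, and the roles are preserved so that the inductive statement $\str{N}_{n-1,k} \Rrightarrow_{2n-1,k} \str{M}_{n-1,k}$ applies to the central sub-blocks. The paper packages the same idea as a simultaneous induction on the auxiliary statement $\Tot_{n,k} \Rrightarrow_{2n,k} \Gap_{n,k}$ together with the endpoint-respecting strengthening $\str{N}_{n,k}^* \Rrightarrow_{2n+1,k} \str{M}_{n,k}^*$, and discharges the `bookkeeping' you defer via explicit composition lemmas for ordered sums (with the threshold function $\rho(n,k)$) and a gap-hiding substructure argument at the base case.
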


Our development of the Duplicator winning strategy in the game follows the inductive construction of the structures themselves.  For this, we first develop some tools for constructing strategies on ordered sums and expansions of structures from strategies on their component parts.  First, we introduce some useful notation.

For any ordered structure $\str{A}$, write $\str{A}^*$ for the expansion of $\str{A}$ with constants $\mathrm{min}$ and $\mathrm{max}$ interpreted by the minimum and maximum elements of the structure.  The main reason for introducing these is that we generally want to restrict attention to Duplicator strategies that respect the minimum and maximum elements and a notationally convenient way to do this is to have constants for these elements.

It is a standard fact that the equivalence relation $\equiv_m$ is a congruence with respect to various ways of combining structures.  In particular, it is so with respect to the notion of ordered sum defined by Ebbinghaus and Flum~\cite[Prop.~2.3.10]{EF99}.  The same method of composition of strategies can be used to show the following.
\begin{lemma}\label{lem:ordered-sum}
 If $\str{A}_1,\str{A}_2,\str{B}_1,\str{B}_2$ are ordered structures and $\bar{a}_1,\bar{a}_2,\bar{b}_1,\bar{b}_2$ tuples of elements from $\str{A}_1,\str{A}_2,\str{B}_1$ and $\str{B}_2$ respectively, such that $(\str{A}_1,\bar{a}_1)^* \Rrightarrow_{n,k} (\str{B}_1,\bar{b}_1)^*$ and $(\str{A}_2,\bar{a}_2)^* \Rrightarrow_{n,k} (\str{B}_2,\bar{b}_2)^*$, then
  $$(\str{A}_1 \oplus \str{A}_2,\bar{a}_1\bar{a}_2)^* \Rrightarrow_{n,k} (\str{B}_1 \oplus \str{B}_2,\bar{b}_1\bar{b}_2)^*.$$
\end{lemma}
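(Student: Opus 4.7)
The plan is to prove Lemma~\ref{lem:ordered-sum} via the game characterization of Theorem~\ref{thm:EF}, composing the two given winning strategies on the components into a single winning strategy on the ordered sum. By the hypothesis and Theorem~\ref{thm:EF}, Duplicator has winning strategies $\mathcal{S}_1$ and $\mathcal{S}_2$ in the $(n,k)$-prefix \efgame games on $((\str{A}_1,\bar{a}_1)^*, (\str{B}_1,\bar{b}_1)^*)$ and $((\str{A}_2,\bar{a}_2)^*, (\str{B}_2,\bar{b}_2)^*)$, respectively. Since the constants $\mathrm{min}, \mathrm{max}$ and the tuples $\bar{a}_i, \bar{b}_i$ are part of the structures, any winning $\mathcal{S}_i$ must always map the $\mathrm{min}$ (resp.\ $\mathrm{max}$) of $\str{A}_i^*$ to that of $\str{B}_i^*$, and each $a_{i,j}$ to $b_{i,j}$. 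It will be this forcing, together with the identification of the shared endpoint in the ordered sum, that glues the two strategies consistently.

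I then describe the combined strategy $\mathcal{S}$ for the game on $((\str{A}_1 \oplus \str{A}_2, \bar{a}_1\bar{a}_2)^*, (\str{B}_1 \oplus \str{B}_2, \bar{b}_1\bar{b}_2)^*)$. In a round where Spoiler plays a $k$-tuple $\bar{c}$ on (say) the $\str{A}$-side, partition $\bar{c}$ into the subtuple $\bar{c}^{(1)}$ of elements lying in $\str{A}_1$ and $\bar{c}^{(2)}$ of elements lying in $\str{A}_2$, assigning the shared endpoint (which is simultaneously the $\mathrm{max}$ of $\str{A}_1^*$ and the $\mathrm{min}$ of $\str{A}_2^*$) to either side arbitrarily; by the constant-preservation property above, the choice is harmless. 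Pad each $\bar{c}^{(i)}$ to a $k$-tuple using the corresponding $\mathrm{min}$ constant, feed the padded tuples as Spoiler-moves into the two auxiliary games, and let $\mathcal{S}_1,\mathcal{S}_2$ produce responses $\bar{d}^{(1)}$ in $\str{B}_1$ and $\bar{d}^{(2)}$ in $\str{B}_2$. Truncating back to the original lengths and concatenating gives Duplicator's response $\bar{d}$ in the sum game. Rounds in which Spoiler plays on the $\str{B}$-side are handled symmetrically.

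To verify that $\mathcal{S}$ is winning, I would argue that after $n$ rounds the pointwise map from picked elements of $\str{A}_1 \oplus \str{A}_2$ to picked elements of $\str{B}_1 \oplus \str{B}_2$ is a partial isomorphism respecting the constant expansions. Because $\mathcal{S}_1$ and $\mathcal{S}_2$ are winning, the restricted maps on each component are partial isomorphisms of the expanded structures; in particular they agree on the shared endpoint, which both strategies send to the shared endpoint of $\str{B}_1 \oplus \str{B}_2$. All relation symbols other than $\leq$ are, by construction of the ordered sum, interpreted as the union of their interpretations in the components, so component-wise preservation gives preservation on the sum. For $\leq$, within a single component the restricted map is order-preserving, and any cross-component comparison is determined by the fact that every element of $\str{A}_1$ lies $\leq \mathrm{max}(\str{A}_1) = \mathrm{min}(\str{A}_2) \leq$ every element of $\str{A}_2$, an inequality that is mirrored exactly on the $\str{B}$-side because the strategies fix $\mathrm{max}$ and $\mathrm{min}$ of the components.

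The main obstacle I anticipate is the seam between the two components: a naive composition without the $^*$-expansion could allow Duplicator to map max of $\str{A}_1$ somewhere inside $\str{B}_1$ while simultaneously mapping min of $\str{A}_2$ somewhere inside $\str{B}_2$, producing a map that is not a partial isomorphism of the sum. The whole point of formulating the hypothesis with $(\cdot)^*$-expansions is exactly to pin down this shared endpoint, and once that observation is in place the composition argument is essentially the standard Feferman–Vaught/Ebbinghaus–Flum one adapted to the prefix game (which, being stratified into odd/even rounds that always use $k$-tuples, composes without additional difficulty since each round is split independently in the two components).
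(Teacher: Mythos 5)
Your composition-of-strategies argument is correct and is exactly what the paper has in mind: the paper gives no detailed proof of this lemma, instead appealing to the standard strategy-composition method for ordered sums (citing Ebbinghaus--Flum), with the $(\cdot)^*$-expansion serving precisely to pin the shared endpoint as you observe. Your write-up simply makes that standard argument explicit (splitting Spoiler's $k$-tuple, padding with constants, and gluing the component partial isomorphisms at the seam), so it matches the intended proof.
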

Note that it is an immediate consequence that the same is true with $\equiv_{n,k}$ in place of $\Rrightarrow_{n,k}$.

Furthermore, this also extends to ordered sums of sequences.  Moreover, we do not have to match the lengths of the sequences as long as they are long enough.  Again, this is standard for the equivalence relation $\equiv_m$~\cite[Ex.~2.3.13]{EF99}, for sequences of length at least $2^m$ and a slightly different notion of ordered sum.  For our relations we obtain a tighter bound, so we prove this explicitly as the proof is instructive.

Define the following function $\rho$ on pairs of natural numbers by recursion.
\begin{eqnarray*}
  \rho(1,k) & = &  2k+2\\
  \rho(n+1,k) & = & (k+2)(\rho(n,k) + 1).
\end{eqnarray*}
A simple induction on $n$ shows that $2(k+3)^n > \rho(n,k)$ for all $k,n \ge 1$.

\begin{lemma}\label{lem:sequence-sum}
  If $\str{A}$ and $\str{B}$ are ordered structures, 
$\str{A}^* \Rrightarrow_{n, k} \str{B}^*$ and  $s, t \ge \rho(n,k)$, then
$\big(\bigoplus_{1\leq i\leq s} \str{A}_i\big)^* \Rrightarrow_{n, k}
\big(\bigoplus_{1\leq j \leq t}\str{B}_i\big)^*$, where $\str{A}_i \equiv_{n,k} \str{A}$ and $\str{B}_i \equiv_{n,k} \str{B}$ for all $i$.
\end{lemma}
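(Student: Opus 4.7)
The plan is to prove the lemma by induction on $n$, using the characterization of $\Rrightarrow_{n,k}$ via the $(n,k)$-prefix \efgame game (Theorem~\ref{thm:EF}) together with the composition Lemma~\ref{lem:ordered-sum} applied to a partition of each sum determined by Spoiler's play.

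For the base case $n=1$ I would argue directly.  Spoiler's $k$-tuple $\bar{a}$ in $(\bigoplus_{i=1}^s\str{A}_i)^*$, together with the constants $\min$ and $\max$, touches at most $k+2$ summands on the left (those containing an element of $\bar{a}$, plus blocks $1$ and $s$).  Since $t \geq 2k+2$, Duplicator can choose $r \leq k+2$ strictly increasing indices $1 = j_1 < \cdots < j_r = t$ on the right that match the touched LHS indices in order and in adjacency pattern, and within each touched pair the hypothesis $\str{A}^* \Rrightarrow_{1,k} \str{B}^*$, transferred to the pair via $\str{A}_{i_p}^* \equiv_{1,k} \str{A}^*$ and $\str{B}_{j_p}^* \equiv_{1,k} \str{B}^*$, yields a sub-tuple $\bar{b}_p$ with the same atomic type as $\bar{a}_p$.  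Distinct summands of an ordered sum share no relational atoms beyond the order comparison itself, which is preserved by the monotone pairing, so the combined response is a partial isomorphism.

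For the inductive step from $n$ to $n+1$, I would apply Lemma~\ref{lem:sigmank-ind}: given Spoiler's $\bar{a}$ in $(\bigoplus \str{A}_i)^*$, Duplicator must produce $\bar{b}$ with $(\bigoplus \str{B}_j,\bar{b})^* \Rrightarrow_{n,k} (\bigoplus \str{A}_i,\bar{a})^*$.  Identify the touched LHS indices $1 = i_1 < \cdots < i_r = s$ with $r \leq k+2$, and choose matching RHS indices $1 = j_1 < \cdots < j_r = t$ so that each gap $g_p = i_{p+1}-i_p - 1$ on the left is paired with a gap $g'_p = j_{p+1}-j_p-1$ on the right satisfying $g'_p = g_p$ when $g_p < \rho(n,k)$ and $g'_p \geq \rho(n,k)$ when $g_p \geq \rho(n,k)$.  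In each touched RHS block $\str{B}_{j_p}$, invoke Lemma~\ref{lem:sigmank-ind} with $\str{A}_{i_p}^* \Rrightarrow_{n+1,k} \str{B}_{j_p}^*$ to pick a sub-tuple $\bar{b}_p$ realizing $(\str{B}_{j_p}, \bar{b}_p)^* \Rrightarrow_{n,k} (\str{A}_{i_p}, \bar{a}_p)^*$.  Each ``small'' gap pair is $\equiv_{n,k}$-equivalent by iterated Lemma~\ref{lem:ordered-sum} applied to the blockwise equivalences (noting that $\str{A}^* \Rrightarrow_{n+1,k} \str{B}^*$ implies $\str{A}^* \equiv_{n,k} \str{B}^*$ since $\Sigma_{n,k} \cup \Pi_{n,k} \subseteq \Sigma_{n+1,k}$), while each ``large'' gap pair is handled by the induction hypothesis at level $n$.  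Iterating Lemma~\ref{lem:ordered-sum} over all chunks then gives the desired $\Rrightarrow_{n,k}$ relation on the full sums.

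The main obstacle is the arithmetic guaranteeing that Duplicator's choice of the $j_p$ in the inductive step is always feasible.  Since $s \geq \rho(n+1,k) = (k+2)(\rho(n,k)+1)$ and there are at most $k+1$ gaps, a pigeonhole argument forces at least one LHS gap to be ``large''; the shape of the recursion $\rho(n+1,k) = (k+2)(\rho(n,k)+1)$ is tuned so that the remaining RHS budget $t - r - \sum_{\text{small}} g_p$ can always be partitioned among the large RHS gaps with each part $\geq \rho(n,k)$, and this inequality is what needs to be verified explicitly.  A secondary but routine point is the consistent treatment of boundary elements shared between consecutive summands of an ordered sum; the $^*$-expansion, marking min and max as constants, is precisely what makes the blockwise application of Lemma~\ref{lem:ordered-sum} legitimate across the partition.
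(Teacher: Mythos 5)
Your proposal follows essentially the same route as the paper's own proof: induction on $n$, decomposition of each ordered sum according to the summands touched by Spoiler's tuple (together with the endpoint constants), matching untouched gaps exactly when they are short and by gaps of length at least $\rho(n,k)$ otherwise, handling the touched summands via Lemma~\ref{lem:sigmank-ind} and gluing all the pieces with Lemma~\ref{lem:ordered-sum}, with the long gaps discharged by the induction hypothesis one level down. The feasibility arithmetic you flag but leave unverified is exactly what the paper also settles only by a brief appeal to $\rho(n,k)=(k+2)(\rho(n-1,k)+1)$ and the bound of $k+2$ on the number of touched summands, so your outline is correct and matches the published argument.
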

\begin{proof}
  The proof is by induction on $n$.  Suppose $n=1$ and Spoiler plays a move choosing $k$ elements from $\bigoplus_{1\leq i\leq s} \str{A}_i$.  Suppose these are chosen from $\str{A}_{i_1},\ldots,\str{A}_{i_l}$ with $1 \leq i_1 < \cdots < i_l \leq s$ for some $l\leq k$.  Further, let $i_0 = 0$ and $i_{l+1} = t$.  Since $l+2 \leq k+2 < 2k+2 = \rho(1,k) \leq  s$, there is some $p$ such that $i_{p+1} > i_p+1$.  Duplicator must choose structures $(\str{B}_{j_q})_{1\leq q \leq l}$ in which to respond.  Moreover, since $\str{A}_i$ and $\str{A}_{i+1}$ share an element for all $i$, whenever $i_{q+1} = i_q + 1$, we must choose $j_{q+1} = j_q +1$.
  
  Duplicator chooses values $ 0 = j_0 <  j_1 < \cdots < j_l \leq j_{l+1} = t$ as follows.  
For all values of $q$ from $0$ to $p-1$, choose $j_{q+1} = j_q + 1$ if $i_{q+1} = i_q + 1$ and choose $j_{q+1} = j_{q}+2$ otherwise.   For all values of $q$ from $l$ down to $p+1$, choose $j_q = j_{q+1} - 1$ if $i_q = i_{q+1} -1$  and choose $j_q = j_{q+1} -2$ otherwise.  Because $t \ge 2k+2$, this guarantees that $j_{p+1} > j_p + 1$.  Thus, Duplicator can 
respond to the elements picked in $\str{A}_{i_p}$ with elements in $\str{B}_{j_p}$ by composing the winning strategies for $\str{A}_{i_p} \equiv_{1,k} \str{A} \Rrightarrow_{1,k} \str{B} \equiv_{1,k} \str{B}_{j_p}$ and this is a winning response.  Moreover, by construction, this strategy maps the minimum and maximum elements of $\bigoplus_{1\leq i\leq s} \str{A}_i$ to the corresponding elements of $\bigoplus_{1\leq i\leq t} \str{B}_j$.

  Now suppose $n \ge 2$ and the statement has been proved for $n-1$.  Let Spoiler play a move choosing $k$ elements from $\bigoplus_{1\leq i\leq s} \str{A}_i$, and again say these are chosen from $\str{A}_{i_1},\ldots,\str{A}_{i_l}$ with $1 \leq i_1 < \cdots < i_l \leq s$ for some $l\leq k$.  Further, define $i_0 = 0$ and $i_{l+1} = s$.  Since $l+2 \leq k+2$ and $t \ge \rho(n,k) = (k+2)(\rho(n-1,k)+1)$, Duplicator can choose indices  $0=j_0 < j_1 \cdots < j_l < j_{l+1} = t$ so that for all $p$ either $j_{p+1} -j_p = i_{p+1} - i_p$ or $(j_{p+1} -j_p),(i_{p+1} - i_p) \geq \rho(n-1,k)+1$.  Now choose, for each $p$ with $1 \leq p \leq l$ a response $\bar{b}_p$ for Duplicator in $\str{B}_{j_p}$ to the elements $\bar{a}_p$ chosen by Spoiler in $\str{A}_{i_p}$.  We claim that
  $$ \big(\bigoplus_{1\leq j \leq t} \str{B}_j, (\bar{b}_p)_{1\leq p\leq l} \big)^* \Rrightarrow_{n-1,k} \big(\bigoplus_{1\leq i \leq s}\str{A}_i, (\bar{a}_p)_{1\leq p\leq l} \big)^*.$$

  To prove this, it suffices to show for each $p$ with $0 \leq p \leq l$ that
  \begin{claim}\label{clm:sequence-sum}
    \[
      \big(\bigoplus_{j_{p}< j \leq j_{p+1}}(\str{B}_j, \bar{b}_{p})\big)^* \Rrightarrow_{n-1,k} \big(\bigoplus_{i_{p}< i \leq i_{p+1}}(\str{A}_i, \bar{a}_p) \big)^*,
      \]
  \end{claim}
  for then the claim follows by $l$ applications of Lemma~\ref{lem:ordered-sum}.  Note that we have,
  \begin{enumerate}
  \item   for all $i,j$ that $\str{B}_j \Rrightarrow_{n-1,k} \str{A}_i$ by the assumption that $\str{A} \Rrightarrow_{n,k} \str{B}$; and \label{first}
  \item for all $p$ we have $(\str{B}_{j_p},\bar{b}_p) \Rrightarrow_{n-1,k} (\str{A}_{i_p},\bar{a}_p)$ by the choice of $\bar{b}_p$ as Duplicator's winning response to Spoiler's choice of $\bar{a}_p$. \label{second}
  \end{enumerate}
Now, for each value of $p$ there are two possibilities:
\begin{description}
\item[case (i):]  $j_{p+1} -j_p = i_{p+1} - i_p$.  In this case, the two sides of Claim~\ref{clm:sequence-sum} are the ordered sums of sequences of equal length.  The corresponding pieces are all related by $\Rrightarrow_{n-1,k}$, either by ~\ref{first}.\ above for all except the last piece or by~\ref{second}.\ for the last piece.  Thus, Claim~\ref{clm:sequence-sum} is established by application of Lemma~\ref{lem:ordered-sum}; 
\item[case (ii):] $(j_{p+1} -j_p),(i_{p+1} - i_p) \geq \rho(n-1,k)+1$.  In this case, the structures on the two sides of Claim~\ref{clm:sequence-sum} can be expressed as $ \big(\bigoplus_{j_{p}< j \leq j_{p+1} - 1}\str{B}_j\big) \oplus (\str{B}_{j_{p+1}}, \bar{b}_{p+1}) $ and $\big(\bigoplus_{i_{p}< i \leq i_{p+1} - 1}\str{A}_i\big) \oplus (\str{A}_{i_{p+1}}, \bar{a}_{p+1}) $, respectively.  Since $(j_{p+1} -j_p -1),(i_{p+1} - i_p -1) \geq \rho(n-1,k)$, we have by the induction hypothesis and~\ref{first}.\ that $ \big(\bigoplus_{j_{p}< j \leq j_{p+1} - 1}\str{B}_j\big)^* \Rrightarrow_{n-1,k} \big(\bigoplus_{i_{p}< i \leq i_{p+1} - 1}\str{A}_i\big)^*$.  This, together with~\ref{second}.\ and Lemma~\ref{lem:ordered-sum} establishes Claim~\ref{clm:sequence-sum}, and hence the inductive step of the proof.
\end{description}
\end{proof}

Besides ordered sums, another key step in the inductive constructions of our structures is adding unary relations which include the minimum and maximum elements of a structure and adding binary relations which relate the minimum and maximum elements.  These operations also behave well with respect to games.  To be precise, suppose $U$ is a unary relation symbol and $T$ a binary relation symbol.  Let $\str{A}$ be an ordered structure with minimum and maximum elements $\mathrm{min}$ and $\mathrm{max}$ respectively. Write $\str{A}_U$ for the structure obtained from $\str{A}$ by including $\mathrm{min}$ and $\mathrm{max}$ in the interpretation of $U$.   Similarly, write $\str{A}_T$ for the structure obtained from $\str{A}$ by adding the pair $(\mathrm{min},\mathrm{max})$ to the interpretation of $T$.  Note that we do not assume that $U$ or $T$ are in the vocabulary of $\str{A}$.  If they are not, then their interpretations in $\str{A}_U$ and $\str{A}_T$ respectively contain nothing other than the elements added.
\begin{lemma}\label{lem:min-max}
  Let $n,k \ge q$.  If $\str{A}$ and $\str{B}$ are ordered structures for which $\str{A}^* \Rrightarrow_{n,k} \str{B}^*$ then $\str{A}_U \Rrightarrow_{n,k} \str{B}_U$ and $\str{A}_T \Rrightarrow_{n,k}\str{B}_T$.
\end{lemma}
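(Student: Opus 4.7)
The plan is to use the \efgame game characterization (Theorem~\ref{thm:EF}): by hypothesis Duplicator wins the $(n,k)$-prefix \efgame game on $(\str{A}^*, \str{B}^*)$, and I will argue that exactly the same strategy, applied verbatim, wins the corresponding games on $(\str{A}_U, \str{B}_U)$ and on $(\str{A}_T, \str{B}_T)$.  A Duplicator strategy is just a response function on play histories and is insensitive to the relational signature, so the strategy transfers syntactically; only the partial-isomorphism check at the end needs to be verified in each new vocabulary.

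The main observation I would use is the following constraint on any winning Duplicator play on $(\str{A}^*, \str{B}^*)$: if $\bar{a}_1, \ldots, \bar{a}_n$ and $\bar{b}_1, \ldots, \bar{b}_n$ are the resulting tuples, then the pointwise map extended by $\mathrm{min}^{\str{A}} \mapsto \mathrm{min}^{\str{B}}$ and $\mathrm{max}^{\str{A}} \mapsto \mathrm{max}^{\str{B}}$ is a partial isomorphism of structures with constants.  Injectivity and constant-preservation of this extended map then force, for every played pair $(a_i, b_i)$, that $a_i = \mathrm{min}^{\str{A}}$ iff $b_i = \mathrm{min}^{\str{B}}$, and $a_i = \mathrm{max}^{\str{A}}$ iff $b_i = \mathrm{max}^{\str{B}}$.

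The $U$ case is then immediate: since $U^{\str{A}_U} = \{\mathrm{min}^{\str{A}}, \mathrm{max}^{\str{A}}\}$ and $U^{\str{B}_U} = \{\mathrm{min}^{\str{B}}, \mathrm{max}^{\str{B}}\}$, the preceding observation gives $a_i \in U^{\str{A}_U}$ iff $b_i \in U^{\str{B}_U}$, which is the only new condition required for the partial isomorphism check.  For the $T$ case, the only pair added to $T^{\str{A}}$ in forming $\str{A}_T$ is $(\mathrm{min}^{\str{A}}, \mathrm{max}^{\str{A}})$, and analogously for $\str{B}_T$; by the same observation, $(a_i, a_j) = (\mathrm{min}^{\str{A}}, \mathrm{max}^{\str{A}})$ iff $(b_i, b_j) = (\mathrm{min}^{\str{B}}, \mathrm{max}^{\str{B}})$, so this new $T$-fact is also preserved.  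I do not expect any serious obstacle here: the content of the lemma is essentially the observation that naming $\mathrm{min}$ and $\mathrm{max}$ by constants is interchangeable, for the purposes of the prefix \efgame game, with marking them by an auxiliary unary (or binary) relation whose interpretation is exactly those elements.
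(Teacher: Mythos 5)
Your proof is correct and takes essentially the same route as the paper, whose argument is exactly your key observation: any winning Duplicator strategy on $(\str{A}^*,\str{B}^*)$ must match $\mathrm{min}$ with $\mathrm{min}$ and $\mathrm{max}$ with $\mathrm{max}$, so the same strategy wins for $\str{A}_U,\str{B}_U$ and $\str{A}_T,\str{B}_T$. One small nit: since $U$ (or $T$) may already be in the vocabulary of $\str{A}$ with a nonempty interpretation, the equality $U^{\str{A}_U}=\{\mathrm{min}^{\str{A}},\mathrm{max}^{\str{A}}\}$ should be replaced by the remark that only $\mathrm{min}$ and $\mathrm{max}$ are \emph{added}, the pre-existing $U$-facts being preserved by the original strategy anyway — exactly as you phrase it in the $T$ case.
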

\begin{proof}
  This is immediate from the fact that a Duplicator winning strategy between $\str{A}^*$ and $\str{B}^*$ must map the minimum elements of the two structures to each other, and similarly for the maximum.
\end{proof}

We are now ready to start inductively constructing the Duplicator winning strategy that establishes Lemma~\ref{lem:main}.  We begin with games on some simple structures.  For any $m \ge 1$ write $L_m$ for the structure with exactly $m$ elements and two binary relations $\leq$ and $S$ where $\leq$ is a linear order and $S$ the corresponding successor relation.
\begin{lemma}\label{lem:successor}
  If $m_1,m_2 > \rho(n,k)$ then $L_{m_1}^* \Rrightarrow_{n,k} L_{m_2}^*$.
\end{lemma}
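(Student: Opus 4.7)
The plan is to derive this as an immediate consequence of Lemma~\ref{lem:sequence-sum}. The key observation is that, for any $m \geq 2$, the structure $L_m$ is isomorphic to the ordered sum of $m-1$ copies of $L_2$. This is because the ordered sum $\str{A} \oplus \str{B}$ identifies the maximum of $\str{A}$ with the minimum of $\str{B}$, so iterating this operation over $m-1$ copies of the two-element chain $L_2$ produces exactly a chain of $m$ elements with the successor relation on each consecutive pair.

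Given this decomposition, I would instantiate Lemma~\ref{lem:sequence-sum} with $\str{A} = \str{B} = L_2$. The hypothesis $\str{A}^* \Rrightarrow_{n,k} \str{B}^*$ of that lemma is satisfied trivially, since $L_2^*$ is isomorphic to itself and the Duplicator plays the identity strategy. Writing $L_{m_1} \cong \bigoplus_{1 \leq i \leq m_1 - 1} \str{A}_i$ and $L_{m_2} \cong \bigoplus_{1 \leq j \leq m_2 - 1} \str{B}_j$ with every $\str{A}_i$ and $\str{B}_j$ isomorphic (hence $\equiv_{n,k}$-equivalent) to $L_2$, the assumption $m_1, m_2 > \rho(n,k)$ yields $m_1 - 1, m_2 - 1 \geq \rho(n,k)$ since these are integers, thereby satisfying the length requirement of Lemma~\ref{lem:sequence-sum}. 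Its conclusion is precisely $L_{m_1}^* \Rrightarrow_{n,k} L_{m_2}^*$.

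There is no real obstacle here: the entire content lies in recognising the correct decomposition of $L_m$ as an ordered sum of copies of $L_2$ and in checking that the strict inequality in the hypothesis exactly accounts for the one-element overlap in each ordered sum step. The proof is therefore a one-line application of Lemma~\ref{lem:sequence-sum}.
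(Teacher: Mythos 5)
Your proof is correct and is essentially identical to the paper's: both decompose $L_m$ as the ordered sum of $m-1$ copies of $L_2$ and invoke Lemma~\ref{lem:sequence-sum}, with your remark that $m_i > \rho(n,k)$ gives $m_i - 1 \geq \rho(n,k)$ simply making explicit the arithmetic the paper leaves implicit.
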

\begin{proof}
  Note that $L_m$ is the ordered sum of a sequence of $m-1$ copies of $L_2$, so the result follows immediately from Lemma~\ref{lem:sequence-sum}.
\end{proof}

Without loss of generality, assume that the universe of $L_m$ is $\{1,\ldots,m\}$ and write $G_m$ for the structure obtained from $L_{m+1}$ by deleting the element $\lceil \frac{m}{2}\rceil$.  Note that $G_m$ is isomorphic to the structure obtained from $L_{m}$ by removing from the relation $S$ the pair $(\lceil \frac{m}{2}\rceil -1, \lceil \frac{m}{2}\rceil)$.

\begin{lemma}\label{lem:succ-gap}
  If $m_1,m_2 \ge 2k+2$ then $G_{m_1}^* \Rrightarrow_{1,k} L_{m_2}^*$.
\end{lemma}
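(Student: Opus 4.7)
The plan is to exhibit a Duplicator winning strategy in the single-round $(1,k)$-prefix \efgame game on $G_{m_1}^*$ and $L_{m_2}^*$; by Theorem~\ref{thm:EF} this yields the claimed relation. Spoiler plays a $k$-tuple $\bar{a}$ in $G_{m_1}^*$, and Duplicator must respond with a $k$-tuple $\bar{b}$ in $L_{m_2}^*$ so that the induced map, extended by $\mathrm{min}\mapsto\mathrm{min}$ and $\mathrm{max}\mapsto\mathrm{max}$, is a partial isomorphism.

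First I would reduce to distinct elements: let $a_1 < \cdots < a_l$ with $l \le k$ enumerate the distinct values appearing in $\bar{a}$. Any response of the form ``map each occurrence of $a_j$ in $\bar{a}$ to $b_j$ in $\bar{b}$'' for some distinct $b_1 < \cdots < b_l$ in $L_{m_2}$ already preserves the equality pattern. In both structures, $S$ holds only between elements at distance $1$ in the linear order: on every such pair in $L_{m_2}$, and on every such pair in $G_{m_1}$ except the special missing pair $(g-1,g)$ with $g = \lceil m_1/2 \rceil$. So it suffices for Duplicator to choose $b_1 < \cdots < b_l$ satisfying (i) $b_1 = 1$ iff $a_1 = 1$, (ii) $b_l = m_2$ iff $a_l = m_1$, and (iii) for each $j < l$, $b_{j+1} = b_j + 1$ iff $S(a_j, a_{j+1})$ holds in $G_{m_1}$; any such choice yields the required partial isomorphism.

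The construction itself I would encode via the gap sequence $d'_0 = b_1$, $d'_j = b_{j+1} - b_j$ for $1 \le j < l$, and $d'_l = m_2 + 1 - b_l$, which is a sequence of $l+1 \le k+1$ positive integers summing to $m_2 + 1$. Conditions (i)--(iii) classify each position either as \emph{forced} (it must equal $1$) or as \emph{free} (it must be $\ge 2$). Setting each free position to $2$ gives a minimum total of at most $2(k+1) = 2k+2 \le m_2+1$, and any residual slack is absorbed by inflating a single free position: this step is routine.

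The main --- essentially the only --- obstacle is ensuring that at least one position is free whenever residual slack is present. If every position were forced, then by (i)--(iii) we would have $a_1 = 1$, $a_l = m_1$, and every adjacent pair $(a_j, a_{j+1})$ in the list $S$-related in $G_{m_1}$; in particular each such pair would be at distance $1$ and avoid the special pair $(g-1,g)$. This forces $\{a_1,\ldots,a_l\} = \{1,\ldots,m_1\}$ and hence $l = m_1 \ge 2k+2 > k$, contradicting $l \le k$. It is here that the hypothesis $m_1 \ge 2k+2$ enters, while $m_2 \ge 2k+2$ powers the feasibility count. Once this obstacle is cleared, the verification that $\bar{a}\mapsto\bar{b}$ together with the constants gives a partial isomorphism is immediate from (i)--(iii) and the strict monotonicity of the $b_j$'s.
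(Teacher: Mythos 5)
Your strategy framework (reduce to distinct elements, encode Duplicator's choice by a gap sequence, count) is reasonable, but there is a genuine gap: conditions (i)--(iii) do not suffice for the winning condition, so the assertion that ``any such choice yields the required partial isomorphism'' is false. The lemma concerns the starred structures, so the map extended by $\mathrm{min}\mapsto\mathrm{min}$ and $\mathrm{max}\mapsto\mathrm{max}$ must also preserve the atoms $S(\mathrm{min},x)$ and $S(x,\mathrm{max})$, which your conditions never constrain. Concretely, if Spoiler plays $a_1=2$ in $G_{m_1}$ with $(1,2)$ not the deleted pair, then (i)--(iii) merely force $b_1\neq 1$, and a choice such as $b_1=4$ satisfies all of them while $S(\mathrm{min},a_1)$ holds in $G_{m_1}^*$ but $S(\mathrm{min},b_1)$ fails in $L_{m_2}^*$; symmetrically at the top end. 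Repairing this adds constraints that change your arithmetic: a \emph{free} first or last gap must have width at least $3$ (its element must be neither equal to the endpoint nor $S$-adjacent to it), so the minimal achievable total of your gap sequence can be as large as $2k+4$, which exceeds $m_2+1$ exactly when $m_2=2k+2$. Your count $2(k+1)\le m_2+1$ therefore no longer closes the argument under the stated bound.

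This is not just a presentational omission, because at the boundary the claim really does fail: for $k=1$ and $m_1=m_2=4$ (so in your representation the missing pair of $G_4$ is $(1,2)$), the $\Sigma_{1,1}$ sentence $\exists x\,(\mathrm{min}<x\wedge x<\mathrm{max}\wedge\neg S(\mathrm{min},x)\wedge\neg S(x,\mathrm{max}))$ is true in $G_4^*$ (witness $x=2$) and false in $L_4^*$. So any correct argument needs $m_2>2k+2$; with that strict bound, your corrected constraints and your ``all positions forced'' argument do go through, and this is also all that matters where the lemma is applied in the paper, since there $m_2$ is much larger. Note that the paper's own proof takes a different and shorter route: $G_{m_1}^*$ is an induced substructure of $L_{m_1+1}^*$ with the same interpretation of the constants, so every existential sentence transfers from $G_{m_1}^*$ to $L_{m_1+1}^*$, and then Lemma~\ref{lem:successor} gives $L_{m_1+1}^*\Rrightarrow_{1,k}L_{m_2}^*$ --- which, consistently with the above, requires $m_2>\rho(1,k)=2k+2$.
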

\begin{proof}
  Since $G_{m_1}^*$ is a substructure of $L_{m_1+1}^*$, every existential sentence true in the former is also satisfied in the latter.  Now, since $L_{m_1+1}^* \Rrightarrow_{1,k} L_{m_2}^*$ by Lemma~\ref{lem:successor}, the result follows.
\end{proof}

The next two lemmas give us the base case of the inductive proof of Lemma~\ref{lem:main}.

\begin{lemma}\label{lem:TGbase}
  $\Tot_{1,k} \Rrightarrow_{2,k} \Gap_{1,k}$
\end{lemma}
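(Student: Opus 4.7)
The plan is to invoke Lemma~\ref{lem:sigmank-ind}, reducing the statement to exhibiting, for every $k$-tuple $\bar{a}$ in $\Tot_{1,k}$, a $k$-tuple $\bar{b}$ in $\Gap_{1,k}$ such that $(\Gap_{1,k}, \bar{b}) \Rrightarrow_{1,k} (\Tot_{1,k}, \bar{a})$.  I would then decompose both $\Tot_{1,k}$ and $\Gap_{1,k}$ (after temporarily stripping the sole $R$-edge) as ordered sums of the intervals bounded by consecutive components of $\bar{a}$ and $\bar{b}$ respectively, check the $\Rrightarrow_{1,k}$ relation piece by piece using Lemmas~\ref{lem:successor} and~\ref{lem:succ-gap}, iterate Lemma~\ref{lem:ordered-sum} to glue the pieces together, and finally apply Lemma~\ref{lem:min-max} to reinstate the $R$-edge between the minimum and the maximum.

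First I would sort $\bar{a}$ as $a_1 < \cdots < a_k$ and, using $m = 6(k+2)^2$ and pigeon-hole on the at most $k+1$ intervals into which $\bar{a}$ partitions $[1,m]$ (taking $a_0 = 1$ and $a_{k+1} = m$), select an index $i^*$ with $a_{i^*+1} - a_{i^*} \geq 2k+3$.  Duplicator then chooses $\bar{b}$ so that: (a) the chain structure is preserved, i.e.\ $b_{i+1} = b_i + 1$ iff $a_{i+1} = a_i + 1$, which preserves all intra-tuple $S$-edges; (b) the sole $R$-edge is preserved, i.e.\ $b_1 = 1$ iff $a_1 = 1$ and $b_k = m$ iff $a_k = m$; (c) no $b_i$ equals $m/2$ or $m/2+1$, so the missing $S$-edge sits strictly inside some open interval of $\bar{b}$; (d) the missing edge lies inside exactly the interval $(b_{i^*}, b_{i^*+1})$, with at least $2k+2$ elements strictly between $b_{i^*}$ and $m/2$ and again strictly between $m/2+1$ and $b_{i^*+1}$; (e) every other non-chain open interval of $\bar{b}$ lies entirely in $[1, m/2]$ or in $[m/2+1, m]$ and has length equal to its $\bar{a}$-counterpart whenever that length is below $2k+3$.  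A straightforward packing argument shows such a $\bar{b}$ exists, since $6(k+2)^2$ is comfortably larger than the total amount of room all these constraints require.

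With $\bar{b}$ fixed, the atomic types of $\bar{b}$ in $\Gap_{1,k}$ and $\bar{a}$ in $\Tot_{1,k}$ agree, and each corresponding pair $([b_i, b_{i+1}]^*, [a_i, a_{i+1}]^*)$ in the $\{\leq, S\}$-reduct falls into one of three cases: a chain pair where both sides are $L_2^*$; a non-chain gap-free pair where both sides are $L_\ell^*$'s that are either isomorphic (for matching short lengths) or covered by Lemma~\ref{lem:successor} (both lengths exceed $\rho(1,k) = 2k+2$); and the distinguished pair at $i^*$, where the $\Gap_{1,k}$-side is a linear order of length $\geq 4k+6$ with one internal $S$-edge deleted while the $\Tot_{1,k}$-side is $L_{\ell'}^*$ with $\ell' \geq 2k+4$, reducing to an instance of Lemma~\ref{lem:succ-gap}.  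Iterating Lemma~\ref{lem:ordered-sum} along these pieces propagates the $\Rrightarrow_{1,k}$ relation to the full $\{\leq, S\}$-reducts, and Lemma~\ref{lem:min-max} (with the binary relation $R$) reinstates the $R$-edge between the global minimum and maximum, yielding $(\Gap_{1,k}, \bar{b}) \Rrightarrow_{1,k} (\Tot_{1,k}, \bar{a})$.  The trickiest ingredient will be verifying (a)--(e) jointly, since Duplicator must simultaneously preserve chain structure, honour the $R$-boundary, and park the missing $S$-edge inside an open interval of $\bar{b}$ that pairs with a long enough open interval of $\bar{a}$; the value $m = 6(k+2)^2$ is engineered precisely so that this juggling always succeeds.
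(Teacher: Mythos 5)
Your proposal is correct in substance and reaches the same endpoint by the same general machinery (reduce via Lemma~\ref{lem:sigmank-ind}/the game, compare pieces, glue with Lemma~\ref{lem:ordered-sum}, reinstate $R$ with Lemma~\ref{lem:min-max}), but the execution differs from the paper's in two ways. The paper splits each side into just \emph{three} blocks: it pigeonholes a stretch of $2k+2$ consecutive \emph{unchosen} elements inside the middle segment of $\Tot_{1,k}$, writes $\Tot_{1,k}$ as (left block with chosen elements) $\oplus$ (that stretch) $\oplus$ (right block with chosen elements), and writes $\Gap_{1,k}$ as $L_{\frac{m}{2}-k-1}\oplus G_{2k+2}\oplus L_{\frac{m}{2}-k-1}$ with the gap block centred; the outer blocks are then handled \emph{wholesale} by the level-$(2,k)$ statement $L_{m_1}^*\Rrightarrow_{2,k}L_{\frac{m}{2}-k-1}^*$ of Lemma~\ref{lem:successor}, which simultaneously produces Duplicator's reply $b_i$'s there and wins the remaining round, and the middle pair is a verbatim instance of Lemma~\ref{lem:succ-gap}. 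You instead decompose into up to $k+1$ intervals bounded by consecutive tuple elements, pigeonhole on a long $\bar a$-interval, and construct $\bar b$ explicitly by length-matching subject to your constraints (a)--(e); this essentially re-proves by hand, at level $(1,k)$, the linear-order composition fact that the paper delegates to Lemmas~\ref{lem:successor}/\ref{lem:sequence-sum}, so you incur the ``packing'' verification (which does go through comfortably with $m=6(k+2)^2$, and which you should spell out since it is the only place the specific constant is used). Two small repairs are needed: your distinguished pair is not literally an instance of Lemma~\ref{lem:succ-gap}, since there the gap sits at the centre of $G_{m_1}$ while in your piece it may be well off-centre; the fix is immediate (the lemma's proof -- the gapped piece is a substructure, sharing min and max, of the linear order with one element restored, then apply Lemma~\ref{lem:successor} -- works for any interior gap position, and indeed your ``at least $2k+2$ elements on each side of the gap'' condition is not actually needed, only that the gap is strictly interior and both pieces are long). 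Also, when gluing, carry the $b_i$'s as the distinguished max/min constants of the pieces so that Lemma~\ref{lem:ordered-sum} really yields $(\Gap_{1,k},\bar b)\Rrightarrow_{1,k}(\Tot_{1,k},\bar a)$ rather than just the sentence-level relation. Overall: your route is more elementary and self-contained, at the cost of bookkeeping; the paper's three-block route is shorter, and its modular use of Lemma~\ref{lem:successor} is the pattern that is recycled in Lemma~\ref{lem:MNbase} and the inductive step of Lemma~\ref{lem:main}.
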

\begin{proof}
  Recall that the $\{\leq,S\}$-reduct of $\Tot_{1,k}$ is the structure $L_m$ for $m=6(k+2)^2$.  Note that $m > 2\rho(2,k)+(k+2)(2k+4)$.  We think of this as composed of  three segments: the first $\rho(2,k)$ elements; the last $\rho(2,k)$ elements and a \emph{middle segment} containing the remainder.
  Suppose now that Spoiler  chooses $k$ elements $a_1 < \cdots < a_k$ from $\Tot_{1,k}$ in the first round of the game.  Since the middle segment  contains more than $(k+2)(2k+4)$ elements, it must contain an interval $[x,y]$ of  $2k+2$ consecutive elements which are not chosen.  Let us say that $a_i < x$ and $y < a_{i+1}$.  Thus, we can write the $\{\leq,S\}$-reduct of $\Tot_{1,k}$ with the chosen elements as
  $$ (L_{m_1},a_1,\ldots,a_i) \oplus L_{m_2} \oplus  (L_{m_3},a_{i+1},\ldots,a_k),$$
  where $m_1,m_3 \geq \rho(2,k)$ and $m_2 \geq 2k+2$.

  Consider now $\Gap_{1,k}$.  The $\{\leq,S\}$-reduct of this structure is $G_m$, which can also be written as $L_{\frac{m}{2}-k-1}\oplus G_{2k+2} \oplus L_{\frac{m}{2}-k-1}$.  Since $\frac{m}{2}-k-1 > \rho(2,k)$, we have by Lemma~\ref{lem:successor} that $L_{m_1}^* \Rrightarrow_{2,k}  L_{\frac{m}{2}-k-1}^*$ and hence there is a choice of elements $b_1,\ldots,b_i$ such that $(L_{\frac{m}{2}-k-1},b_1,\ldots,b_i)^* \Rrightarrow_{1,k} (L_{m_1},a_1,\ldots,a_i)^*$.  Similarly, there is a choice of elements $b_{i+1},\ldots,b_k$ such that $(L_{\frac{m}{2}-k-1},b_{i+1},\ldots,b_k)^* \Rrightarrow_{1,k} (L_{m_3},a_{i+1},\ldots,a_k)^*$.  Further, we know that $G_{2k+2}^* \Rrightarrow_{1,k} L_{m_3}^*$ by Lemma~\ref{lem:succ-gap}.  Hence, by Lemma~\ref{lem:ordered-sum} we have that $L_m^* \Rrightarrow_{2,k} G_m^*$.  The result now follows by Lemma~\ref{lem:min-max} as $\Tot_{1,k}$ and $\Gap_{1,k}$ are obtained from $L_m$ and $G_m$ respectively by relating the minimum and maximum elements with the relation $R$.  
\end{proof}

A similar pattern of argument is repeated in the second base case, and we will be less detailed in spelling it out.

\begin{lemma}\label{lem:MNbase}
  $\str{N}_{1,k}^* \Rrightarrow_{3,k} \str{M}_{1,k}^*$.
\end{lemma}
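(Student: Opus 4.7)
The plan is to exhibit a winning strategy for Duplicator in the $(3,k)$-prefix \efgame game on $(\str{N}_{1,k}^*,\str{M}_{1,k}^*)$, obtained by lifting the $(2,k)$-prefix strategy from Lemma~\ref{lem:TGbase} through the outer ordered-sum decomposition. Write $s = 4(k+3)^3+2k+1$ and $c = 2(k+3)^3+k+1$, so that $\str{N}_{1,k}=\bigoplus_{i=1}^s \str{G}_i$ with each $\str{G}_i\cong\Gap_{1,k}$, while $\str{M}_{1,k}$ is obtained by replacing $\str{G}_c$ with a copy $\str{T}$ of $\Tot_{1,k}$. Since $2(k+3)^3+k > \rho(3,k)$, there is generous slack in the outer counts, which will be exploited at every round.

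In round 1, Spoiler picks $k$ elements of $\str{N}_{1,k}$ lying in copies at indices $i_1<\cdots<i_l$ with $l\le k$; set virtual boundaries $i_0=0$ and $i_{l+1}=s+1$. Duplicator chooses matching indices $j_1<\cdots<j_l$ in $[s]\setminus\{c\}$ so that (a) for every consecutive pair $(p,p+1)$, either $j_{p+1}-j_p = i_{p+1}-i_p$ or both gaps are at least $\rho(2,k)+1$; and moreover (b) the unique gap of $J$ containing $c$ has at least $\rho(2,k)+1$ positions on either side of $c$. The slack in $s$ together with the centrality of $c$ in $[s]$ guarantees such a choice, and within each matched copy Duplicator responds via the canonical isomorphism $\str{G}_{i_p}\to\str{G}_{j_p}$.

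In round 2, Spoiler picks $k$ elements of $\str{M}_{1,k}$. Duplicator handles them componentwise: elements in already-matched $\Gap$ copies are mirrored by the canonical isomorphism; elements in previously unplayed $\Gap$ copies are mirrored into fresh $\Gap$ copies of $\str{N}_{1,k}$ using the sequence-matching template from the proof of Lemma~\ref{lem:sequence-sum} (reindexed to skip the central block); and elements in the central copy $\str{T}$ are mirrored into an unused $\Gap$ copy $\str{G}_{i^*}$ from the central block of $\str{N}_{1,k}$ (available by (b)) using the first round of the strategy witnessing $\Tot_{1,k}^*\Rrightarrow_{2,k}\Gap_{1,k}^*$ from Lemma~\ref{lem:TGbase}. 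In round 3, Spoiler picks $k$ elements of $\str{N}_{1,k}$; for elements in $\str{G}_{i^*}$, Duplicator completes the second round of Lemma~\ref{lem:TGbase}'s strategy, landing in $\str{T}$; for all other elements, Duplicator mirrors via the canonical isomorphism (matched copies) or the sequence-matching template (new copies). The per-copy winning responses compose into a global partial isomorphism by iterated application of Lemma~\ref{lem:ordered-sum}, and the $^*$-constants are respected since every sub-strategy preserves minima and maxima.

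The principal obstacle is the combinatorial bookkeeping of round 1: showing that Duplicator can always find a $J\subseteq[s]\setminus\{c\}$ satisfying (a) and (b) simultaneously, i.e., that the outer gap structure can always be matched while leaving a large central neighborhood around $c$ with a corresponding spare $\Gap$ copy in $\str{N}_{1,k}$. This is essentially an instance of the analysis underlying Lemma~\ref{lem:sequence-sum} applied one level up, and the counts $2(k+3)^3+k$ on each side of the central piece are calibrated precisely so both constraints can be met. Once round 1 succeeds, rounds 2 and 3 reduce mechanically to (the two rounds of) Lemma~\ref{lem:TGbase} on the central pair of copies and to identity responses on all other copies, glued together by Lemma~\ref{lem:ordered-sum}.
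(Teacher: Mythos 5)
Your overall architecture is the right one (reduce the $\Tot_{1,k}$-versus-$\Gap_{1,k}$ mismatch to Lemma~\ref{lem:TGbase} and handle everything else by sequence-matching, gluing with Lemma~\ref{lem:ordered-sum}), and the directions of the sub-games are all correct. But the execution has a genuine gap, concentrated exactly at the point you yourself flag as the principal obstacle and then do not carry out. Your round-1 conditions (a) and (b) are not strong enough to make rounds 2--3 ``mechanical''. Condition (b) constrains only the $\str{M}$-side: it guarantees $\rho(2,k)+1$ spare $\Gap$ copies on either side of $c$, but says nothing about the $\str{N}$-side gap $(i_p,i_{p+1})$ that your matching pairs with the $J$-gap containing $c$. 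Condition (a) allows that $\str{N}$-side gap to be as small as $\rho(2,k)+1$. In that case the segment of $\str{M}_{1,k}$ containing $\Tot_{1,k}$, flanked by more than $\rho(2,k)$ copies of $\Gap_{1,k}$ on each side, must be played against an $\str{N}$-segment with only about $\rho(2,k)$ spare copies; a black-box application of Lemmas~\ref{lem:sequence-sum} and~\ref{lem:ordered-sum} to this segment would require the $\str{N}$-side to split into two blocks of length at least $\rho(2,k)$ around the copy $\str{G}_{i^*}$, i.e.\ at least $2\rho(2,k)+1$ spare copies, which you do not have. One can rescue this either by strengthening round 1 (align $c$'s gap with a large gap of Spoiler's indices, which always exists) or by re-running the inductive bookkeeping of Lemma~\ref{lem:sequence-sum} inside that segment, noting that after round 2 only $\rho(1,k)$-sized buffers are needed; but neither is done, and the claim that the gluing follows by ``iterated application of Lemma~\ref{lem:ordered-sum}'' is not justified as stated. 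Two smaller slips point to the same fuzziness: the spare copy $\str{G}_{i^*}$ must lie in the $\str{N}$-gap paired with $c$'s $J$-gap, which need not be ``the central block of $\str{N}_{1,k}$'' at all (e.g.\ if Spoiler clusters his picks at the centre of $\str{N}_{1,k}$, your matching pushes $c$ into an outer gap), and its availability comes from (a), not from (b).

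You also miss the observation that makes all of this bookkeeping unnecessary, and which is how the paper argues. Since Spoiler's first-round picks touch at most $k$ of the $4(k+3)^3+2k+1$ copies, one of the \emph{middle $k+1$ copies of $\str{N}_{1,k}$} is untouched. Align that untouched copy with the $\Tot_{1,k}$ copy of $\str{M}_{1,k}$ immediately, writing $\str{N}_{1,k}$ with the picked elements as a prefix sum of more than $\rho(3,k)$ copies of $\Gap_{1,k}$, then the untouched copy, then a suffix sum of more than $\rho(3,k)$ copies, and $\str{M}_{1,k}$ analogously around $\Tot_{1,k}$. Lemma~\ref{lem:sequence-sum} at level $(3,k)$ (together with Lemma~\ref{lem:sigmank-ind}) then hands you the round-1 responses on the two flanks with the required $\Rrightarrow_{2,k}$ relation, Lemma~\ref{lem:TGbase} handles the middle, and Lemma~\ref{lem:ordered-sum} glues; all the gap analysis you attempt to redo by hand is already contained in the proof of Lemma~\ref{lem:sequence-sum}, and no new combinatorial claim needs to be verified.
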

\begin{proof}
  Recall that $\str{N}_{1,k}$ is the ordered sum of $m=4(k+3)^3+2k+1$ copies of $\Gap_{1,k}$.  So $\str{N}_{1,k} =  \bigoplus_{i \in [m]}\str{G}_i$.  Note that $m > 2\rho(3,k) + k+1$.   Suppose now that Spoiler  chooses $k$ elements $a_1 < \cdots < a_k$ in the first round of the game.  Thus, there is an index $i$ in the middle segment of $[m]$ of length $k+1$ such that $\str{G}_i$ does not contain a chosen element and we can write $\str{N}_{1,k}$ with the chosen elements as $(\bigoplus_{i \in [m_1]}\str{G}_i,a_1,\ldots,a_j) \oplus \Gap_{1,k} \oplus (\bigoplus_{i \in [m_2]}\str{G}_i,a_{j+1},\ldots,a_k)$, where $m_1,m_2 > \rho(3,k)$.

  On the other side,  $\str{M}_{1,k} =  \bigoplus_{i \in [(m-1)/2]}\str{G}_i \oplus \Tot_{1,k} \oplus  \bigoplus_{i \in [(m-1)/2]}\str{G}_i$.  Since $(m-1)/2 > \rho(3,k)$, by Lemma~\ref{lem:sequence-sum} we have $ \big(\bigoplus_{i \in [m_1]}\str{G}_i\big)^* \Rrightarrow_{3,k} \big( \bigoplus_{i \in [(m-1)/2]}\str{G}_i\big)^*$ and $ \big(\bigoplus_{i \in [m_2]}\str{G}_i\big)^* \Rrightarrow_{3,k} \big( \bigoplus_{i \in [(m-1)/2]}\str{G}_i\big)^*$.  Thus, we can find elements $b_1,\ldots,b_k$ such that
  $$\big( \bigoplus_{i \in [\frac{m-1}{2}]}\str{G}_i,b_1,\ldots,b_j\big)^* \Rrightarrow_{2,k} (\bigoplus_{i \in [m_1]}\str{G}_i,a_1,\ldots,a_j)^*$$ and
  $$\big( \bigoplus_{i \in [\frac{m-1}{2}]}\str{G}_i,b_{j+1},\ldots,b_k\big)^* \Rrightarrow_{2,k} (\bigoplus_{i \in [m_2]}\str{G}_i,a_{j+1},\ldots,a_k)^*.$$  Combining this with the fact that $\Tot_{1,k} \Rrightarrow_{2,k} \Gap_{1,k}$ by Lemma~\ref{lem:TGbase}, we get by Lemma~\ref{lem:ordered-sum} that $(\str{M}_{1,k},b_1,\ldots,b_k)^* \Rrightarrow_{2,k} (\str{N}_{1,k},a_1,\ldots,a_k)^*$ and the result follows.
\end{proof}

These last two lemmas form the base case of the
induction that establishes the main result.  Where the argument is analogous to the previous ones, we skim over the details.

\begin{proof}[Proof of Lemma~\ref{lem:main}]
  We prove the following two statements by induction for all $n,k \geq 1$.
  \begin{enumerate}
  \item $\Tot_{n,k} \Rrightarrow_{2n,k} \Gap_{n,k}$
  \item $\str{N}_{n,k}^* \Rrightarrow_{2n+1,k} \str{M}_{n,k}^*$
  \end{enumerate}
  The case of $n=1$ is established in Lemmas~\ref{lem:TGbase} and~\ref{lem:MNbase} respectively.  Suppose now $n \ge 2$ and we have established both statements for $n-1$.  

  First, recall that $\str{M}_{n-1,k}^+$ and $\str{N}_{n-1,k}^+$ are obtained from $\str{M}_{n-1,k}$ and $\str{N}_{n-1,k}$ respectively by including their minimum and maximum elements in the unary relation $P_n$ and the binary relation $S_n$.  Thus, by the induction hypothesis and Lemma~\ref{lem:min-max}, we have $\str{N}_{n-1,k}^+ \Rrightarrow_{2n-1,k} \str{M}_{n-1,k}^+$.

  $\Tot_{n,k}$ consists of the ordered sum of a sequence of $m = 4(k+3)^{2n}+2k+1 > 2\rho(2n,k) + k + 1$ copies of $\str{M}_{n-1,k}^+$, along with a relation $R_n$ containing just the pair with the minumum and maximum elements.  When Spoiler chooses $k$ elements from this structure, we can find a copy of $\str{M}_{n-1,k}^+$ in the middle $k+1$ copies that has no element chosen and there are $m_1 > \rho(2n,k)$ copies before it and $m_2 > \rho(2n,k)$ copies after it.  We can similarly express $\Gap_{n,k}$ as the ordered sum of a sequence of $(m-1)/2 > \rho(2n,k)$ copies of $\str{M}_{n-1,k}^+$, followed by a copy of $\str{N}_{n-1,k}^+$ and a further $(m-1)/2 > \rho(2n,k)$ copies of $\str{M}_{n-1,k}^+$.  Lemma~\ref{lem:sequence-sum} tells us that we can find a response to the chosen elements in the first and third parts.  This combined with the fact that $\str{N}_{n-1,k}^+ \Rrightarrow_{2n-1,k} \str{M}_{n-1,k}^+$  and using Lemma~\ref{lem:min-max} to expand to the relation $R_n$   gives us the desired result.

  The argument for the second statement is entirely analogous.  $\str{N}_{n,k}$ is the ordered sum of a sequence of $m = 4(k+3)^{2n+1}+2k+1 > 2\rho(2n+1,k) + k + 1$ copies of $\Gap_{n,k}$.  When Spoiler chooses $k$ elements from this structure, we can find a copy of $\Gap_{n,k}$ in the middle $k+1$ copies that has no element chosen and there are $m_1 > \rho(2n+1,k)$ copies before it and $m_2 > \rho(2n+1,k)$ copies after it.  Since $\str{M}_{n,k}$ has $(m-1)/2 > \rho(2n+1,k)$ copies of $\Gap_{n,k}$ followed by a copy of $\Tot_{n,k}$ and a further $(m-1)/2 > \rho(2n+1,k)$ copies of $\Gap_{n,k}$,  by Lemma~\ref{lem:sequence-sum} we can find responses to the chosen elements in the first and third parts. We have already proved that $\Tot_{n,k} \Rrightarrow_{2n,k}  \Gap_{n,k}$.  We can combine these to complete the proof.  
\end{proof}

\section{Concluding Remarks}
We have established in this paper that the extension-closed properties of finite structures that are definable in first-order logic are not contained in any fixed quantifier-alternation fragment of the logic.  The construction of the sentences demonstrating this is recursive.  It builds on a base of known counter-examples for the \LT theorem in the finite and lifts them up inductively.  Also, the argument for showing inexpressibility in fixed levels of the quantifier-alternation hierarchy builds on the game arguments used with previously known examples and builds on them systematically using a form of Feferman-Vaught decomposition (see~\cite{makowsky}) for ordered sums of structures.

Our result actually establishes that for all \emph{odd} $n > 1$, there is a $\Sigma_n$-definable property that is closed under extensions but not definable by a $\Pi_n$ sentence.  Interestingly, this is not true for even values of $n$.  In particular, it is known that every $\Sigma_2$-definable extension-closed property is already definable in $\Sigma_1$.  This observation is credited to Compton in~\cite{gurevich} and may also be found in~\cite{SankaranAMKC12}.  We do not know, however, whether for even $n > 2$, the extension closed properties in $\Sigma_n$ can all be expressed in $\Pi_n$ or even $\Sigma_{n-1}$.  On the other hand, we are able to observe that for all even $n > 1$, there is a $\Pi_n$-definable extension-closed property that is not in $\Sigma_n$.  This is a direct consequence of our proof.  Indeed, consider the sentence $\phi_n$ obtained from $\SomeTotalR_n$ by removing the two outer existential quantifiers and replacing the resulting free variables with new constants $a$ and $b$.  This is a $\Pi_{2n}$ sentence and is easily seen to define an extension-closed class.  By our construction, $\phi_n$ is satisfied in the expansion of $\Tot_{n,k}$ where $a$ and $b$ are the minimum and maximum elements respectively.  At the same time $\phi_n$ is false in the similar expansion of $\Gap_{n,k}$.  Since we showed that $\Tot_{n,k}^* \Rrightarrow_{2n,k} \Gap_{n,k}^*$, it follows that $\phi_n$ is not equivalent to a $\Sigma_{2n}$ sentence.  It would be interesting to complete the picture of extension-closed properties in the remaining quantifier-alternation fragments, specifically $\Sigma_n$ for even values of $n$ and $\Pi_n$ for odd values of $n$.

It is a feature of our construction that the vocabulary $\sigma_n$ in which we construct the sentences which separate extension-closed $\Sigma_{2n+1}$ from $\Pi_{2n+1}$ grows with $n$.  Could our results be established in a fixed vocabulary?  Indeed, does something like Theorem~\ref{thm:main} hold for finite graphs?

Another interesting direction left open from our work is the relation with $\Datalog$.  All the extension-closed $\fo$-definable properties we construct are also definable in $\Datalog$.  Rosen and Weinstein~\cite{RosenW} ask whether this is true for all $\fo$-definable extension-closed properties, and this remains open.  Indeed, it is conceivable that we have an extension-preservation theorem for least fixed-point logic in the finite, so that even all LFP-definable extension-closed properties are in $\Datalog$.


\bibliographystyle{plain}

\bibliography{refs}


\end{document}